\definecolor{darkgreen}{rgb}{0.0,0,0.9}
\DeclareMathOperator{\image}{Im}
\def\be{{\bf e}}
\newtheorem{theorem}{Theorem}[section]
\newtheorem{lemma}[theorem]{Lemma}
\newtheorem{fact}[theorem]{Fact}
\newcommand{\suppress}[1]{}
\theoremstyle{definition}
\newtheorem{definition}[theorem]{Definition}
\newenvironment{fminipage}%
  {\begin{Sbox}\begin{minipage}}%
  {\end{minipage}\end{Sbox}\fbox{\TheSbox}}
\newenvironment{algbox}[0]{\vskip 0.2in
\noindent 
\begin{fminipage}{6.3in}
}{
\end{fminipage}
\vskip 0.2in
}
\def\C{\mathbb{C}}
\def\R{\mathbb{R}}
\def\Z{\mathbb{Z}}
\def\bz{{\bf z}}
\def\bx{{\bf x}}
\def\by{{\bf y}}
\def\bkappa{{{\bm{\kappa}}}}
\def\balpha{{\bm{\alpha}}}
\def\bone{{\bf 1}}
\def\bb{{\bf b}}
\def\pp{{\bf p}}
\newcommand\yy{\boldsymbol{\mathit{y}}}
\def\zz{{\bf z}}
\def\xx{{\bf x}}
\def\yy{{\bf y}}
\newcommand{\st}{\mbox{\rm s.t. }}
\newcommand{\Ex}{\mathbb{E}}
\newcommand{\coord}[3]{(#1_#2)_#3}
\begin{document}
	
\title{Nash Social Welfare for Indivisible Items under \\
Separable, Piecewise-Linear Concave Utilities}

\author{
{Nima Anari} \thanks{Stanford University,
	Email: {\sf anari@stanford.edu}.
}
\and
{ Tung Mai} \thanks{Georgia Tech,
	Email: {\sf tung.mai@cc.gatech.edu}.
}
\and
{ Shayan Oveis Gharan} \thanks{University of Washington,
	Email: {\sf shayan@cs.washington.edu}.
}
\and 
{ Vijay V. Vazirani}\thanks{Georgia Tech,
Email: {\sf vazirani@cc.gatech.edu}.
}
}

\date{}
\maketitle

\begin{abstract}
Recently Cole and Gkatzelis \cite{CG-Nash} gave the first constant factor approximation algorithm for the problem of 
allocating indivisible items to agents, under additive valuations, so as to maximize the {\em Nash social welfare (NSW)}. 
We give constant factor algorithms for a substantial generalization of their problem -- to the case of separable, piecewise-linear concave utility functions. We give two such algorithms, the first using market equilibria and the second using the
theory of stable polynomials.

In AGT, there is a paucity of methods for the design of mechanisms for the allocation of indivisible goods and the result of \cite{CG-Nash} seemed to be  taking a major step towards filling this gap. Our result can be seen as another step
in this direction.
\end{abstract}

\pagenumbering{gobble}

\newpage
	
\setcounter{page}{0}
\pagenumbering{arabic}

\section{Introduction}
\label{sec.intro}
It is well known that designing mechanisms for allocating indivisible items is much harder than for divisible items. In a sense, this dichotomy holds widely in the
field of algorithm, e.g., consider the difference in complexity of solving linear programs vs integer linear programs.
This difference is most apparent in the realm of computability of market equilibria, where even though the first
result introducing market equilibria to the AGT community, namely \cite{DPS}, dealt with the case of indivisible goods, there is a paucity of results for this case.
We are only aware of \cite{CR2007,BHM}; the first deals with smooth Fisher markets, in which small changes in
prices cause only proportionately small changes in demand, and the second studies
the question of allocating indivisible resources using the notion of competitive equilibrium from equal incomes (CEEI), giving 
algorithms and hardness results for two classes of valuations, namely perfect substitutes and perfect complements.
On the other hand, very impressive progress has been made for the case of divisible goods, e.g., see \cite{DPS,DPSV,DV1,JMS,JainAD,Chen.plc,VY,DK08,CSVY,GMSV,GMV,ChenTeng,CPY}.

Recently Cole and Gkatzelis \cite{CG-Nash} took a major step towards developing methodology for designing mechanisms for the 
allocation of indivisible items. They gave the first constant factor approximation algorithm for the problem of 
allocating indivisible items to agents, under additive valuations, so as to maximize the {\em Nash social welfare (NSW)}. 

%In an instance of the Nash welfare maximization problem 
They studied the following problem. We are given  a set of $m$ indivisible items  and we want to assign them to $n$ agents. An allocation vector is a vector  $\bx\in\{0,1\}^{[n]\times [m]}$ such that for each item $i$, exactly one $x_{a,i}$ is 1. Perhaps, the simplest model for the utility of an agent is the linear model. That is, each agent $a$ has a non-negative utility $u_{a,i}$ for an item $i$ and the utility that $a$ receives  for an allocation $\bx$ is
$$ u_a(\bx)=\sum_{i=1}^m x_{ai}u_{ai}.$$
The NSW objective is to compute an allocation $\bx$ that maximizes the geometric mean of agents' utilities,
$$ \left(\prod_{a=1}^n u_a(\bx)\right)^{\frac{1}{n}}.$$
%Note that conventionally NSW objective is defined as  the geometric mean of the agents' values. Here, we use the product to simplify our notation.

The above objective naturally encapsulates both fairness and efficiency and has been extensively studied as a notion of fair division (see ~\cite{Moulin,CKMPSW16} and references therein).
Cole and Gkatzelis \cite{CG-Nash} designed a $2e^{1/e}$ approximation algorithm for the above problem. This was later improved independently to $e$
in \cite{AOSS17} and $2$ in \cite{C+}.
%As is well known, NSW provides a happy compromise between fairness and efficiency, and is a well studied solution concept.
%We describe their techniques in Section \ref{sec.tech}. 

The case of indivisible goods is clearly very significant in AGT and there is a need to develop our understanding of such problems, both in terms of
positive and negative results. It is therefore natural to study generalizations of the Cole-Gkatzelis setting. 
Clearly, linear utility functions are too restrictive.
In economics, concave utility functions occupy a special place because of their generality and because they capture the
natural condition of decreasing marginal utilities. Since we wish to study allocation of indivisible items, we will assume that utility functions are 
piecewise-linear concave and additively separable over item types. In this paper, we obtain a constant factor approximation algorithm 
for NSW under these utilities -- this is a substantial generalization of the problem of \cite{CG-Nash}.

\suppress{
Suppose we have $m$ item {\em types}. For item type $i$, assume  we have a supply of $k_i$ units. The utility of each agent is separable over item types, but over each item type it is piecewise-linear concave. Let $u_{aij}$  be the marginal utility that agent $a$ receives from the $j$-th copy of item $i$. 
We assume that for all agents $a$ and items $i$,
$$ u_{ai1} \geq u_{ai2} \geq \dots \geq u_{aik_i}.$$	
	In an assignment, assume that agent $a$ receives $\ell_i$ copies of item $i$ for all $1\leq i\leq m$. Then her utility will be
	\[ \sum_{i=1}^{m}\sum_{j=1}^{\ell_i}u_{aij}. \]
	As before, the goal is to maximize the geometric mean of utilities of all agents.
}

The study of computability of market equilibria
started with positive results for the case of linear utility functions \cite{DPS,DPSV,DV1,JMS,JainAD}. However, its generalization to separable, piecewise-linear concave  
(SPLC) utilities was open for several years before
it was shown to be PPAD-complete \cite{ChenTeng,Chen.plc,VY}. Our first belief was that NSW under SPLC utilities should not admit a constant factor algorithm and that
the resolution of this problem lay in the realm of hardness of approximation results. Therefore, our positive result came as
a surprise.
We give constant factor approximation algorithms for our problem using two very different techniques.

\subsection{Problem Formulation}
\label{sec.formulation} 
Assume that there are $n$ agents and $m$ item types. For item type $i$, assume that we have a supply of $k_i$ units. The utility of each agent is separable over item types, but over each item type it is piecewise-linear concave.
	
	Now define $u_{aij}$ to be the marginal utility that agent $a$ receives from the $j$-th copy of item $i$.
	For each agent $a$ and item type $i$ we assume
	$$u_{ai1} \geq a_{ai2} \geq \dots \geq u_{aik_i} \geq 0$$

An allocation vector $\bx$ is a  vector where for each item type $i$,
 $$\sum_{a,j} x_{aij} \leq k_i.$$ %is equal to 1 for exactly one agent and zero otherwise. 
 We say $\bx$ is an integral allocation vector if all coordinates of $\bx$ are $0$ or $1$.	
 In other words, we allocate at most $k_i$ copies of each item type $i$. 
 For an allocation vector $\bx$, the utility of agent $a$ is 
$$u_a(\bx)=\sum_{i=1}^m \sum_{j=1}^{k_i} x_{aij} u_{aij}.$$ 
%to denote the utility of agent $a$ for this allocation. 

	% If in an assignment agent $a$ receives $\ell_i$ copies of item $i$, his utility will be
%	\[ \sum_{i=1}^{m}\sum_{j=1}^{\ell_i}u_{aij}. \]
	For clarity of notation throughout this paper  we use $\prod_a u_a(\bx)$ to denote the Nash welfare of an allocation $\bx$.
	With this notation, the goal is to maximize the product of utilities of all agents.
	This problem can be captured by the following integer program:
	\begin{equation}
		\label{eq:defintprog}
		\begin{aligned}
			\max_{x_{aij}} ~~~ & \left( \prod_{a=1}^{n} \sum_{i=1}^{m}\sum_{j=1}^{k_i} x_{aij}u_{aij}\right)^{1/n}, &\\
			\st ~~~ & \sum_{a=1}^{n}\sum_{j=1}^{k_i} x_{aij}\leq k_i & \forall i\\
			& x_{aij}\in \{0,1\} & \forall a, i, j\\
		\end{aligned}
	\end{equation}
%Note that the set of feasible solutions of the above mathematical program are all allocation vectors $\bx$.
	
%\section{Preliminaries}

\subsection{Contributions}
Our emphasis in this paper is on the development of techniques for designing mechanisms for the allocation of indivisible items. We prove our main theorem using two different techniques. The first one exploits the structure of the market equilibrium, building on \cite{CG-Nash}.
\begin{theorem}\label{thm:market}
The spending restricted algorithm given in Figure~\ref{alg:main} runs in polynomial time and yields a fractional allocation which when rounded using
the algorithm in Figure \ref{alg:marketRounding}, gives a factor 2 approximation algorithm for NSW for SPLC utilities.
\end{theorem}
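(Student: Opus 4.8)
The plan is to instantiate the market-equilibrium framework of \cite{CG-Nash} for SPLC utilities, in three parts: (i) set up and solve in polynomial time a spending-restricted Fisher market associated with the given SPLC instance; (ii) show that the Nash welfare of the resulting fractional allocation is within a constant factor of the optimal integral Nash welfare $\mathrm{OPT}$ of \eqref{eq:defintprog}; and (iii) round the fractional allocation to an integral one while losing at most another constant factor, with the product of the two constants equal to $2$.

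For (i), I would treat each copy $j$ of each item type $i$ --- equivalently, each linear segment of an agent's piecewise-linear utility --- as a divisible good, give every agent a budget of $1$, and impose the spending-restriction constraint that the total money spent on any single good is at most $1$. The enabling fact is that although computing an SPLC \emph{exchange} equilibrium is PPAD-complete, the \emph{Fisher} market with separable PLC utilities admits an Eisenberg--Gale-type convex program with a rational optimum, and appending the linear spending-cap constraints keeps the program convex; hence it is solvable in polynomial time (by convex programming, or combinatorially in the style of the additive case). From the KKT/equilibrium conditions I would read off the standard structure: each agent spends only along maximum-bang-per-buck (MBB) segments; each good is either ``capped'' (exactly one unit of money is spent on it, with residual demand absorbed by a price increase) or priced by demand; and the equilibrium utilities $u_a^\ast = u_a(\bx^\ast)$ are well defined.

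For (ii), I would prove $\mathrm{OPT} \le c_1 \prod_a u_a^\ast$ for an absolute constant $c_1$. This is precisely the step where the spending restriction is indispensable: without the cap, a single very valuable item could be fractionally split among all $n$ agents at equilibrium, making the fractional Nash welfare arbitrarily larger than that of any integral allocation, so no constant bound could hold; with the cap, no good absorbs more than one unit of money, which limits this effect. The argument compares the equilibrium prices against the value that items contribute in an optimal integral allocation, using the MBB property together with concavity of the SPLC utilities to bound the Nash-welfare gain of moving from the equilibrium toward $\mathrm{OPT}$ by $c_1$.

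For (iii), I would form the bipartite \emph{spending graph} on agents and good-copies (an edge whenever $x_{aij}^\ast > 0$), cancel cycles by the usual flow-rerouting argument --- which preserves prices and the utilities $u_a^\ast$ --- to reduce to a forest, root each tree, and assign to each agent the copies on its child edges, so that every copy goes integrally to at most one of its equilibrium neighbors (the roots being handled separately). Using the cap structure and the decreasing marginals $u_{ai1} \ge u_{ai2} \ge \cdots \ge u_{aik_i} \ge 0$, one then shows each agent's integral utility is at least a constant fraction of $u_a^\ast$, and this constant times $c_1$ equals $2$. The main obstacle I expect is exactly this rounding step under concavity: when an agent receives several copies of one item type its marginal utility falls off, so the clean ``one good per agent'' bookkeeping of the additive analysis must be redone while tracking which segment of each agent's piecewise-linear utility each assigned copy occupies, so that the constant does not degrade. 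A secondary point requiring care is verifying that imposing the spending restriction destroys neither polynomial-time solvability nor rationality of the SPLC Fisher equilibrium.
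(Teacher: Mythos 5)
The central claim in your part (i) is false, and it is precisely the obstacle the paper had to invent new machinery to circumvent. Computing a Fisher equilibrium under SPLC utilities is PPAD-complete \cite{ChenTeng,Chen.plc,VY}: SPLC utilities are concave but not homogeneous, so they are not captured by an Eisenberg--Gale-type convex program (and equilibria need not even be rational); appending linear spending caps does not rescue this. Nor can you simply declare each copy $j$ of type $i$ a separate divisible good: the supply consists of $k_i$ interchangeable units, and different agents rank the segments differently, so ``copy $j$'' is not a well-defined good with a single price. The paper's resolution is a new \emph{utility allocation market}: each type $i$ carries one base price $p_i$; an agent buying several copies of type $i$ pays the base price only for his least-valued copy and pays an \emph{extra utility spending} $q_{aij}=u_{aij}/b_a-p_i$ on the others, so that his bang-per-buck $b_a$ is equalized across everything he buys. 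The equilibrium of this model is then computed by a bespoke combinatorial scaling algorithm (max-flow in the network $N(\pp,\bb)$, the \textsc{PriceIncrease} subroutine, with $\Delta$ halved each phase), not by convex programming. Your proposal contains neither the model nor the algorithm, so part (i) does not go through as written.

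Part (iii) also stops where the real work begins. You correctly flag that decreasing marginals break the additive bookkeeping, and the paper additionally notes that the fractional allocation of a given type need not partition into unit-sized pieces, but you resolve neither issue. The paper's fix is to build from the equilibrium a \emph{linear} NSW instance $\mathcal{I}_{linear}$ in which each individual item is valued at exactly what was paid for it in equilibrium ($p_i+q_{aij}$ for superior items, $p_i$ for active ones, $0$ elsewhere), to prove (Lemma~\ref{lem:reduction}) that the upper bound of Lemma~\ref{lem:upperbound} coincides for $\mathcal{I}_{splc}$ and $\mathcal{I}_{linear}$, and then to invoke the factor-$2$ rounding of \cite{C+} on the linear instance as a black box, reinterpreting its output as an SPLC allocation of at least the same welfare. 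Without this reduction, or a worked-out substitute, the assertion that your rounding constant times $c_1$ ``equals $2$'' is unsupported; part (ii) as stated is likewise only a sketch of the intended Lemma~\ref{lem:upperbound}.
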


The factor 2 algorithm for the linear case was shown to be tight in \cite{C+}. Hence, the bound for our algorithm stated above is also tight.

Our second approach is purely algebraic and uses the machinery of real stable polynomials, building on \cite{AOSS17,AO16}.
\begin{theorem}\label{thm:mainstable}
Program \eqref{eq:convprog} is a convex relaxation of the Nash-welfare maximization problem with SPLC utilities. There is a randomized  algorithm  that rounds any feasible solution of the convex program  to an integral solution with the Nash welfare at least $1/e^2$ fraction of the optimum (of \eqref{eq:convprog}) in expectation.
\end{theorem}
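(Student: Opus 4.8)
The plan has two parts: first verify that \eqref{eq:convprog} is a genuine convex relaxation, then design and analyze the rounding.

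For the relaxation claim, I would note that the objective of \eqref{eq:convprog} is a sum of logarithms of nonnegative linear forms in the $x_{aij}$, hence concave, over a polyhedral feasible region, so \eqref{eq:convprog} is a convex program solvable to any desired accuracy in polynomial time. To see it relaxes \eqref{eq:defintprog}, observe that every integral allocation feasible for \eqref{eq:defintprog} is feasible for \eqref{eq:convprog} with the same objective value. The only subtlety is the SPLC structure, and here I would use a \emph{front-loading} step: replacing each block $(x_{ai1},\dots,x_{aik_i})$ by its nonincreasing rearrangement leaves every column sum $\sum_{a,j}x_{aij}$ unchanged, hence preserves feasibility, and since $u_{ai1}\ge\cdots\ge u_{aik_i}$ it does not decrease any $u_a(\bx)$; for a front-loaded point, $\sum_j x_{aij}u_{aij}$ equals exactly the concave SPLC utility that agent $a$ derives from its fractional amount $\sum_j x_{aij}$ of item $i$. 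So we may assume the point we round is front-loaded, and the optimum of \eqref{eq:convprog} is at least that of \eqref{eq:defintprog}.

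For the rounding, the goal is to produce from a feasible, front-loaded $\bx$ a distribution $\mu$ over integral allocations $\hat\bx$ with (i) $\mathbb{E}_\mu[u_a(\hat\bx)]=u_a(\bx)$ for every agent $a$ and (ii) strong negative dependence of the copy-indicators — concretely, $\mu$ strongly Rayleigh. The key structural point is that the feasibility constraints decouple across item types: a feasible allocation is just an independent choice, for each type $i$, of how to hand out at most $k_i$ copies among the agents. For each $i$ I would write $y_{ai}=\sum_j x_{aij}=\lfloor y_{ai}\rfloor+\{y_{ai}\}$, give agent $a$ its $\lfloor y_{ai}\rfloor$ copies deterministically (feasible, since $\sum_a\lfloor y_{ai}\rfloor\le k_i$), and note that the residual fractions $\{y_{ai}\}\in[0,1)$ lie in a uniform-matroid polytope of rank $r_i:=k_i-\sum_a\lfloor y_{ai}\rfloor$ on the agents. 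I would round this residual with a scheme producing a strongly Rayleigh measure with exactly the prescribed marginals — such schemes follow from the closure properties of real stable polynomials used in \cite{AO16,AOSS17} — independently over the item types, obtaining $\mu$. Front-loading ensures the marginal utility of the (at most one) random extra copy of type $i$ that agent $a$ may get is precisely $u_{a,i,\lfloor y_{ai}\rfloor+1}$, which gives (i).

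The heart of the proof, and the step I expect to be the main obstacle, is the per-agent expected-log bound. I would split $u_a(\hat\bx)=D_a+R_a$, with $D_a$ the deterministic contribution of the floor copies and $R_a\ge 0$ the random part, so $\mathbb{E}[R_a]=u_a(\bx)-D_a$. If $D_a\ge \tfrac12 u_a(\bx)$ then $u_a(\hat\bx)\ge D_a$ surely and we lose only $\log 2$. Otherwise $\mathbb{E}[R_a]\ge\tfrac12 u_a(\bx)$, and the task is to show that $R_a$ — a sum of weighted, negatively dependent Bernoulli contributions — is anti-concentrated, i.e.\ exceeds a constant fraction of its mean with constant probability; this is where I would invoke the strong Rayleigh property of $\mu$ (which yields sharp lower-tail bounds for such sums) together with a mild pre-processing of the fractional solution that deletes, for each agent, any contribution $x_{aij}u_{aij}$ that is only a $1/\mathrm{poly}$ fraction of $u_a(\bx)$ (this changes the value of \eqref{eq:convprog} by at most a constant factor, exactly as in the corresponding step of \cite{AOSS17}), so that no agent's utility rests on a vanishingly unlikely copy. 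Combining the two cases should give $\mathbb{E}_\mu[\log u_a(\hat\bx)]\ge \log u_a(\bx)-2$ for every $a$; summing over $a$ and applying Jensen to the convex map $t\mapsto e^{t/n}$ then yields $\mathbb{E}_\mu\big[(\prod_a u_a(\hat\bx))^{1/n}\big]\ge e^{-2}\,(\prod_a u_a(\bx))^{1/n}$, the claimed guarantee. The difficulty I anticipate is precisely the anti-concentration claim: negative dependence by itself does not forbid a single heavy-tailed term from dragging $\mathbb{E}[\log R_a]$ to $-\infty$, so the argument needs both the pre-processing and a quantitative lower-tail estimate for sums of negatively dependent weighted Bernoullis, and making these compatible with the SPLC feature that an agent can simultaneously receive many random copies of many types is the delicate point — and is where the extra factor $e$ over the linear case of \cite{AOSS17} is incurred.
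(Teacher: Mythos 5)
There is a genuine gap, and it starts with a misreading of what \eqref{eq:convprog} actually is. You describe its objective as ``a sum of logarithms of nonnegative linear forms in the $x_{aij}$,'' but that is the \emph{naive} relaxation $\max \sum_a \log u_a(\bx)$, which the paper explicitly discards because its integrality gap is unbounded. The actual program \eqref{eq:convprog} is $\adjustlimits\sup_{\bx,\balpha\geq 0}\inf_{\by,\bz>0} p_\bx(\by)q(\balpha\bz)/(\by\bz)^{\balpha}$, where $q$ is the stable polynomial built from $\prod_i (t+y_i/k_i)^{k_i}$ whose role is precisely to kill the monomials of $p_\bx$ in which some $y_i$ has degree exceeding $k_i$. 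Showing this is a relaxation is not a concavity observation; it requires the inequality \eqref{eq:pqupperbound} of \autoref{thm:pqstability} together with an explicit computation of $C_q(\be_S)$, which your argument never touches.

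The rounding analysis has a fatal flaw for the same underlying reason. Your plan hinges on the per-agent bound $\Ex[\log u_a(\hat\bx)]\geq \log u_a(\bx)-O(1)$, i.e.\ on rounding within a constant of the naive fractional value $\prod_a u_a(\bx)$. Take one item type with $k_1=1$ and $n$ agents each valuing the item at $1$: the symmetric fractional point gives $u_a(\bx)=1/n$ for every $a$, yet every integral allocation leaves $n-1$ agents with utility $0$, so $\Ex[\log u_a(\hat\bx)]=-\infty$ under any rounding scheme. Your proposed pre-processing does not help here, since the single contribution $x_{a11}u_{a11}$ is \emph{all} of $u_a(\bx)$ and would not be deleted; and no strong Rayleigh or anti-concentration property can rescue a statement that is false for every integral allocation. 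This is exactly the unbounded-gap phenomenon that forces the detour through stable polynomials. The paper avoids it by never passing to per-agent logarithms: it bounds $\Ex[\prod_a u_a(\hat\bx)]$ directly as $\sum_{S\in\cS}\Ex[\prod_{(a,i,j)\in S}X_{aij}]\,C(S)$, lower-bounds each term by $e^{-n}$ times the corresponding coefficient (\autoref{lem:lb}, using independent sampling with replacement rather than your floor-plus-matroid scheme), and then compares this sum to the sup-inf value via \eqref{eq:pqlowerbound}, losing another factor $e^{n}$ there. The two losses of $e^{n}$ on the product are where the stated $1/e^{2}$ on the geometric mean comes from; they are not a per-agent $\log 2$ plus an anti-concentration constant.
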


\suppress{
We remark that our second theorem is not algorithmic at this moment. Although we analyze a randomize rounding algorithm and we show that its expected welfare is at least $1/e^2$ fraction of the optimum,  the welfare of our algorithm can have a variance which is exponentially large in $n$. In principal, one should be able to use the method of conditional expectation to turn our algorithm into a polynomial time algorithm. But, that requires an approximate counting algorithm for a certain generalization of the result of \cite{JSV04} on counting the number of perfect matching in bipartite graphs. We leave this question as an open problem. 
%First, we use a market equilibrium based idea which is along the lines of the work of Cole and Gkatzelis. Recently, \cite{} obtained
%a constant factor algorithm for the case of additive valuations using results from the theory of stable polynomials. Our second algorithm
%extends their approach to SPLC  utility functions. Both methods involve new ideas and these are detailed in Section \ref{sec.tech}.
}

\subsection{Techniques}
\label{sec.tech}
In this part we discuss the main new ideas behind our two proofs. We start with the market equilibrium result.

\paragraph{Techniques Used in Market Equilibrium Approach.}
The starting point of \cite{CG-Nash}, henceforth called the CG-result, was a new market model in the Fisher setting called the {\em spending restricted model}.
 They modified the Fisher model as follows: each buyer has \$1 and the amount of money that can be spent on any good is at most
\$1, regardless of its price. 
As a result, the amount of good $i$ sold is $\min(1, 1/p_i)$, assuming unit amount of each good in the market. 
Moreover, since the total spending on any good can be at most \$1, the price of a highly desirable good is pushed higher 
and at equilibrium, each high priced good (having price 1 or more) is (essentially) allocated to one buyer. Clearly utilities of buyers can be scaled arbitrarily. 
At this point, the CG algorithm scales utilities of all buyers so that their utility from their maximum bang-per-buck goods equals the equilibrium
prices. Now, the product of the prices of high price items is an upper bound on OPT and the remaining problem is only to assign the low price
items integrally -- in the equilibrium, they have been allocated fractionally. This is done via rounding.

\cite{CG-Nash} gave a combinatorial polynomial time algorithm for finding an equilibrium in the spending restricted model by modifying 
combinatorial algorithms for finding equilibria for Fisher markets under linear
utility functions \cite{DPSV,Orlin}.
As mentioned above, equilibrium computation for SPLC utilities is not in P and so this starting point is not
available to us. The closest thing available is that if one assumes perfect price discrimination, then a polynomial time algorithm was
given by \cite{GV}. However, our problem has little to do with price discrimination. 
Our first task therefore was to define a suitable market model that will compute a fractional allocation which provides an upper bound
on the optimal NSW and to which rounding can be applied.

Our key clue comes from the observation that equilibrium prices of goods provide a proxy 
for the utility accrued by the agent who gets this good. The dilemma is that our market has multiple items of each type
and the same agent may derive different utility from different items of the same type.
Thus, our market model should sell different items of the same type for different prices!
At this point, it was natural for us to define a market model in which agents pay for utility rather than the amount of items they receive. 
In our model we impose a spending restriction of \$1 on each item of each type.

In our new market model buyers have SPLC utility functions and have unit money each. There is a base price for each type of item. If an agent buys more than 
one item of the same type, he spends the base price for the item that he receives the least utility from. For all other items of this type that he buys, the price of 
the item for him is scaled up so that the ratio of price/utility is the same.
If the price of type $i$ is $p_i$, then the amount of an item of this type that is sold at equilibrium is $\min (1, 1/p_i)$. 
%  what does price mean?
We call our model the {\em utility allocation market}, since buyers pay for the amount of utility they accrue,
in a certain well defined way (see Section \ref{sec.model}). 

Computing an equilibrium in this model is not straightforward. Our algorithm iteratively computes better and better approximations to the equilibrium
via scaling, using a parameter $\Delta$ which halves after each iteration. 
We maintain the invariant that the base price of each type can only increase. In each scaling iteration, buyers start with surplus money and
spend it all by the end of the iteration. The amount spent on each good can exceed its spending restriction, but by at most $\Delta$. 
The main difficulty in executing an iteration under SPLC utilities is that as base prices change, the money charged for different items of the same type changes
by different amounts because of the different utilities buyers accrue from them.

Our rounding algorithm is a generalization of the ones in \cite{CG-Nash,C+}. We first allocate to agents the integral part of their allocation. The new difficulty lies
in that the fractional allocations of a given item type add up to more than one item in general and there may not be a way of partitioning the fractions
so that they add up to a unit each. To get around this, we use information in the obtained fractional allocation to create a new instance under
linear utilities. We apply the rounding procedure of \cite{C+} to the latter instance and it yields an allocation for our SPLC problem. A critical step in this proof is
showing that the upper bound which \cite{C+} compare their solution to also applies to our SLPC instance.
Hence we get a factor 2 algorithm for the NSW problem with SPLC utilities.

\paragraph{Techniques Used in Real Stability Approach.}
Our second proof is purely algebraic and exploits the theory of real stable polynomials (see \cite{Pem12} for background). 
Consider an integral allocation vector $\bx$; recall that  the Nash welfare corresponding to this allocation is $\prod_a u_a(\bx)$. Note that this function is log concave in $\bx$ and it can be maximized by standard convex programming tools. Unfortunately, the ratio of the optimum of this convex program and the welfare of the optimum integral allocation can 
be unbounded  in the worst case \cite{CG-Nash}. 
%To get around this problem, in \cite{AOSS17}, the authors wrote the welfare as a polynomial of fictitious variables $y_1,\dots,y_m$. In this paper, we start with a generalization of this
%idea to SPLC utilities. Let
Let $\bx$ be an arbitrary fractional allocation, and suppose we allocate the items using the natural randomized rounding method: For each item type $i$, we generate $k_i$ samples $(a,j)$ independently where each sample is equal to $(a,j)$  with probability $x_{aij}/k_i$, for all $a,j$. We can study the (expected) welfare of this algorithm by summing up a subset of coefficients of a multivariate polynomial. Consider the polynomial
$$ p_\bx(y_1,\dots,y_m)=\prod_a \sum_{i=1}^m \sum_{j=1}^{k_i} x_{aij} u_{aij} y_i.$$
If $k_i=1$ for all $i$, then the (expected) welfare of the above randomized rounding algorithm is just the sum of the coefficients of multi-linear terms of the above polynomial. This is the main idea of \cite{AOSS17}.

In our case, however, the (expected) welfare of the randomized rounding algorithm (up to a normalizing factor and a loss of $e^n$) is the sum of the coefficients of all monomials of $p_\bx$ of degree $n$ where the degree of each $y_i$ is at most $k_i$ (see \autoref{lem:lb}).  
%For a polynomial $p\in \R[y_1,\dots,y_m]$ and a vector $\bkappa\in \Z_+^n$ let $C_p(\bkappa)$ be the coefficient of the monomial $\prod_i y_i^{kappa_i}$ of $p$. Let ${\cal M}$ be a matroid with elements $[m]$. 
%We employ a recent theorem \cite{AO16} which gives a 
%The key to decrease the integrality gap is to define the welfare of (the fractional allocation) $\bx$ as the 
As a sanity check, observe that  if $\bx$ was indeed integral only these monomials have nonzero coefficients in $p_\bx(y_1,\dots,y_m)$.
Therefore, the question boils down to writing a convex relaxation of the sum of coefficients of monomials of $p$ of degree $n$ where the degree of each $y_i$ is at most $k_i$. 
%We can write this as follows: Let 
%$$q(y_1,\dots,y_m)=\sum_{\bkappa\in \Z_+^n: \kappa_i\leq k_i, \forall i} \prod_i y_i^{\kappa_i}$$
%We need to write a convex relaxation for $\sum_{\bkappa\in \Z_+^m} C_{p_\bx}(\bkappa)C_q(\bkappa)$, where we use $C_p(\bkappa)$ to denote the coefficient of the monomial $\prod_i y_i^{\kappa_i}$ in $p$.

We use a recent result \cite{AO16} where it is shown that for any real stable polynomial $p$ and any subset $S$ of monomials of $p$ that ``correspond'' to a real stable polynomial, there is a convex function that approximates the sum of coefficients of monomials of $p$ in $S$ within a factor of $e^k$ where $k$ is the largest degree of monomials in $S$ (see \autoref{thm:pqstability} for more details). To be precise, consider the polynomial
$$ q(y_1,\dots,y_m)=\frac{\partial_t^n}{n!} \prod_{i=1}^n \sum_{j=0}^{k_i} t^j y_i^j\Big|_{t=0},$$
i.e., the coefficient of $t^n$ in $\prod_{i=1}^n \sum_{j=0}^{k_i} t^j y_i^j$.
This polynomial has all monomials of degree $n$ in $y_1,\dots,y_m$ such that the degree of each $y_i$ is at most $k_i$; the coefficient of each such monomial is 1. To put it differently, $q$ can be seen as the generating polynomial of the set of bases of a {\em laminar} matroid of depth 2.

The expected welfare of our randomized rounding algorithm is $q(\partial \by)p_\bx(\by)|_{\by=0}$, up to normalizations. Unfortunately, polynomial $q$ is not real stable (see 
Section~\ref{sec:stableprelim} for definition of real stability); so \autoref{thm:pqstability} is not applicable. %For example, $1+y_1+y_1^2$ ...
Instead, we work with a real stable polynomial $\tilde{q}(y_1,\dots,y_m)$ such that $\tilde{q}\approx q$. Let, 
$$ \tilde{q}(y_1,\dots,y_m)=\frac{\partial_t^{K-n}}{(K-n)!}\prod_{i=1}^m (t+y_i/k_i)^{k_i}\Big|_{t=0},$$
where $K=\sum_i k_i$; in other words, $\tilde{q}$ is the coefficient of $t^{K-n}$ in $\prod_{i=1}^m (t+y_i/k_i)^{k_i}$. 
It is not hard to see that the above polynomial is real stable. Furthermore,  the coefficient of each monomial of $\tilde{q}$ is at least $e^{-n}$, i.e., $\tilde{q}$ is an $e^n$ approximation of $q$ at any point. So, up to another loss of $e^n$ we can use \autoref{thm:pqstability} to find a fractional allocation vector $\bx$ that maximizes $\tilde{q}(\partial \by)p_\bx(\by)|_{\by=0}$; this gives our convex relaxation of the NSW with SPLC utilities. Then, it follows that the above randomized rounding algorithm receives at least $e^{2n}$ fraction of the optimum welfare in expectation.

\subsection{Overview of the Paper}

In Section \ref{sec.market} we give our algorithm using the market-based approach. 
In Section \ref{sec.model} we define our {\em utility allocation market model} in which buyers are charged according to the utility they accrue -- thus two items of the same 
type could end up costing different amounts not only to two different buyers but also to the same buyer. 
Next we give the spending restricted version of this model.
In Section \ref{sec.alg-market} we give a combinatorial polynomial time algorithm for finding an equilibrium in this market.
The output of this algorithm is a fractional allocation. In Section \ref{sec.upper} we show how to derive an upper bound on OPT from this allocation.
Finally, in Section \ref{sec.round-alg} we show how to round this solution, hence yielding a factor 2 approximation algorithm.

In Section \ref{sec.stability} we give our algorithm based on real stable polynomials.
Section \ref{sec:stableprelim} defines key concepts and states the relevant theorems of Gurvits and \cite{AO16}.
In Section \ref{sec:stableproof} we give our convex program for this approach, followed by an algorithm for rounding the fractional allocation
obtained by solving this convex program. We prove that this yields an $e^2$-approximation algorithm for our problem.

\section{The Market-Based Approach}
\label{sec.market}

%We start by defining our utility allocation market model with spending restrictions.

\subsection{Spending-Restricted Utility Allocation Market}
\label{sec.model}

\paragraph{Utility Allocation Market}
%We give a market-based interpretation of the convex program such that an equilibrium allocation of the market corresponds exactly to a solution of the convex program. 

We will assume there are $n$ agents and $m$ item-types in the market. Each type $i$ has a supply of $k_i$  items. The utility of each agent is additively separable over the item types, but piecewise-linear concave over each type of items. Each agent has a budget of one dollar, and each type $i$ has a base price $p_i$ for a single item of that type. 
The price of an item to an agent can be more than its base price as defined below.
We denote $(a,i,j)$ by the $j$-th item of type $i$ that $a$ receives (fractionally). 
 
%Consider an agent $a$ spending money on at least $j$ items of type $i$. Define the \emph{base bang-per-buck} of the $j$-th item with respect to $a$ to be $\frac{u_{aij}}{ p_i}$, and the %\emph{actual bang-per-buck} of the item with respect to $a$ to be $\frac{u_{aij}}{ p_i + q_{aij}}$. There are two ways $a$ can spend money on his $j$-th item of type $i$:
%\begin{enumerate}
%	\item Spending $x_{aij}p_i$ to obtain $x_{aij}$ of the item if he already bought $j-1$ full items of type $i$ and the actual bang-per-buck values of those $j-1$ items are equal to the %base bang-per-buck of the $j$-th item.
%	\item Spending $q_{aij}$ to decrease the actual bang-per-buck of the $j$-th item if he already paid $p_i$ to get the full item. This is necessary if he wants to receive more than $j$ %items of type $i$.
%\end{enumerate}

\begin{definition}
	The \emph{bang-per-buck} of an item for an agent $a$ is the ratio of utility $a$ derives from the item and the amount of money $a$ spends on the item.  
\end{definition}

\paragraph{Admissible Spending}  

%Given a set of prices, an admissible spending of agent $a$ at target bang-per-buck $b_a$ is given as follows: 
%\begin{enumerate}
%	\item For the $j$-th item of type $i$ whose base bang-per-buck with respect to $a$ is strictly greater than $b_a$, spend $p_i + q_{aij}$ to get the full item. Let $q_{aij} = %\frac{u_{aij}}{b_a} - p_i$ so that the actual bang-per-buck of the item is exactly equal to $b_a$. Stop if $a$ spends all of his budget.
%	\item For the $j$-th item of type $i$ whose base bang-per-buck with respect to $a$ is equal to $b_a$, spend $x_{aij}p_i$ to get $x_{aij}$ fraction of the item for $x_{aij} \leq 1$ and %$x_{aij}$ as large as possible. 
%\end{enumerate} 
%It can be seen that if $b_a$ decreases, the amount of money $a$ spends at target bang-per-buck $b_a$ cannot decrease. Therefore, if we set $b_a$ initially at a very large value and %decrease its value continuously, $a$ must spend all of his budget at some value of $b_a$. 
%We call an admissible spending of $a$ at such value of $b_a$ an \emph{admissible spending} of $a$, and call $b_a$ the \emph{bang-per-buck of $a$}. 

Given a set of prices, an agent can only spend his money in a certain way. Roughly speaking, each agent spends so that he gets optimal utility for the amount of money he spends at current prices. We define an \emph{admissible spending} of an agent $a$ as follows: 

\begin{definition}An \emph{admissible spending} of agent $a$ is a spending such that: 
\begin{enumerate}
	\item The money $a$ spends is at most his budget.
	\item There exists a value $b_a$ such that for each $(a,i,j)$:
		\begin{enumerate}
			\item If $b_a < u_{aij}/p_i$, $(a,i,j)$ is called a \emph{superior} item. In this case, $a$ must spend $\frac{u_{aij}}{b_a}$ to get the full item ($x_{aij} = 1$). 
			%We call 
			%\[ q_{aij} = \frac{u_{aij}}{b_a} - p_i \]
			%the \emph{extra utility spending} of $a$ on the item.
			\item If $b_a = u_{aij}/p_i$, $(a,i,j)$ is called an \emph{active} item. In this case, $a$ can spend $x_{aij} p_i$ to get $x_{aij}$ fraction of the item. 
			\item If $b_a > u_{aij}/p_i$, $(a,i,j)$ is called an \emph{inferior} item. In this case, $a$ must not spend on $(a,i,j)$ ($x_{aij} = 0$).
		\end{enumerate}

%	all items that $a$ spends money on has actual bang-per-buck of $b_a$ with respect to $a$. For all other items, their base bang-per-buck values must be less than $b_a$. Moreover, for each item whose base bang-per-buck with respect to $a$ is strictly greater than $b_a$, $a$ must spend $\frac{u}{b_a}$ to get the full item, where $u$ is the amount of utility $a$ derives from the item. Specifically, if the item is the $j$-th item of type $i$ that $a$ gets, then $q_{aij} = \frac{u_{aij}}{b_a} - p_i$, and the amount of money $a$ spends on the item is $p_i + q_{aij} =  \frac{u_{aij}}{b_a}.$
\end{enumerate}
%If $x_{aij} > 0$, we call $x_{aij} p_i$ the \emph{base spending} of $a$ on the item.
%If $a$ spends all of his budget, we call it a \emph{full admissible spending}.
The corresponding $b_a$ is called the \emph{bang-per-buck} of $a$.
\end{definition}
 
The equivalent notion of admissible spending in linear utility Fisher market is simply spending on items that maximize the ratio of utility to price. 
Moreover, with respect to an admissible spending,  the total amount of money $a$ spends on $(a,i,j)$
%the $j$-th item of type $i$ 
can be written as 
$p_{i}x_{aij} + q_{aij}$ where
\[ q_{aij}  = \begin{cases} \frac{u_{aij}}{b_a} - p_i  &\mbox{if $(a,i,j)$ is a superior item,} \\ 
0 & \mbox{otherwise. } \end{cases} \] 
We call $p_{i}x_{aij}$ the \emph{base spending}  and $q_{aij}$ the \emph{extra utility spending} of $a$ on the item.
 
The extra utility spending captures our idea that buyers are charged according to the utility they accrue. 
Specifically, the $q$-variables guarantee that the bang-per-buck value is the same for all items that $a$ receives. 
In other words, $a$ has to pay more for items that give him more utility, even though they might be of the same type.

\paragraph{Spending-Restricted Equilibrium} A price $\pp$ is \emph{spending-restricted equilibrium price} if 
there exists an admissible spending for each buyer such that the budgets are fully spent and the total base spending on each item of type $i$ is exactly $\min(p_i,1)$.
The corresponding allocation $\xx$ is called a \emph{spending-restricted allocation}.
 
Note that in this spending-restricted equilibrium, items with price greater than 1 do not have to be completely sold. 
Moreover, the total spending on each item is at most 1 (hence the name spending-restricted). To see this, consider an arbitrary item. 
If it is a superior item, there is only one agent spending on it. Therefore, the total spending on the item is at most the budget of the agent, i.e., 1.
Otherwise, if it is not a superior item, the total spending on it is equal to the base spending and bounded by 1.

\subsection{Algorithm to Compute an Equilibrium Solution}
\label{sec.alg-market}

\paragraph{High-level Idea of the Approach} Recall that our goal is to compute a price and an allocation such that:
\begin{enumerate}
	\item The spending of each agent is an admissible spending. 
	\item Each buyer spends all his budget.
	\item The total base spending on all items of type $i$ is $k_i \min(p_i,1)$.
\end{enumerate}
Following the approach of \cite{DPSV}, the idea of our algorithm is maintaining conditions 1 and 3 while satisfying condition 2 gradually. Note that condition 1 makes sure that the money spent by buyers is at most their budget. The algorithm maintains a price for each item type and a bang-per-buck value for each agent. As the algorithm progresses, prices are increased and bang-per-buck values are decreased so that buyers with surplus money have an opportunity to spend their remaining budgets,
without violating conditions 1 and 3. When the money of all buyers is completely spent, all three conditions are satisfied and the algorithm terminates. 

\paragraph{Scaling Technique}
To obtain a polynomial running time, we use a scaling technique similar to the ones in \cite{Orlin} and \cite{CG-Nash}. As a consequence, rather than maintaining the third condition exactly, we make sure that it is approximately satisfied. 	
Let $$
	p_i(\Delta)=
	\begin{cases}
	\lceil p_i / \Delta \rceil\ \Delta & \text{ if $p_i$ is not a multiple of $\Delta$, }\\
	p_i + \Delta  & \text{ otherwise.}
	\end{cases}
	$$
We give the following definition of 
%\emph{$\Delta$-admissible spending} and 
\emph{$\Delta$-allocation}. 

%\begin{definition}	Consider an admissible spending of agent $a$. The amount of money $a$ spends on the $(a,i,j)$ is $p_i x_{aij} + q_{aij}$. 
%If instead he spends $p_i(\Delta) x_{aij} + q_{aij}$ for all $i$ and $j$, we say that the spending of $a$ is a \emph{$\Delta$-admissible spending}.
%\end{definition}

\begin{definition}
An allocation $\xx$ is a \emph{$\Delta$-allocation} with respect to a price vector $\pp$ if the spending of each agent is 
%a $\Delta$-admissible spending. 
an admissible spending, and the total base spending on all items of type $i$ is 
%\begin{itemize}
%	\item equal to $k_i \min\{ 1, p_i(\Delta)\}$ if $p_i$ is not a multiple of $\Delta$, 
%	\item at least $k_i \min\{ 1, p_i \}$ and at most $k_i \min\{ 1, p_i(\Delta) \}$ if $p_i$ is a multiple of $\Delta$. 
%\end{itemize}
at least $k_i \min\{ 1, p_i \}$ and at most $k_i \min\{ 1, p_i(\Delta) \}$. 
We say that $\pp$ \emph{supports} a $\Delta$-allocation.
If the agents spend all their budgets in a $\Delta$-allocation, $\xx$ is called a \emph{full $\Delta$-allocation}.
\end{definition}

Our scaling algorithm maintains a $\Delta$-allocation at all steps for appropriate values of $\Delta$. Specifically, $\Delta$ must be a power of 2 and is halved each scaling phase as the algorithm processes. Note that as $\Delta$ gets smaller, the value of $p_i(\Delta)$ gets closer to $p_i$, and the approximate version of condition 3 gets closer to the exact version. One can show that within $O(K \log V_{\text{max}})$ where $V_{\text{max}} = \max_{a,i,j,i',j'} \{ \frac{u_{aij}}{u_{ai'j'}}\}$ and $K = \sum_{i=1}^m k_i$ is the total number of items, the value of $\Delta$ is small enough, and hence a full $\Delta$-allocation gives an exact solution.

\paragraph{The Network $N(\pp,\bb)$} 
A key ingredient of the algorithm lies in constructing and computing max-flow in a directed network which we call $N(\pp,\bb)$.
The first step of our construction of $N(\pp,\bb)$ is to fully assign all superior items. To be precise, for each triplet $(a,i,j)$ such that $\frac{u_{aij}}{p_i} > b_a$, we set $x_{aij} = 1$ and charge $a$ the amount $\frac{u_{aij}}{b_a}$. At the end of this step, let $e_a$ be the amount of money agent $a$ spends on superior items and $l_i$ be the number of superior items of type $i$. 

We then construct a directed network as follows. The network has a source $s$, a sink $t$ and vertex sets $A$ and $I$ corresponding to agents and item-types respectively. The source $s$ is connected to each agent $a$'s vertex via a directed edge of capacity $1 - e_a$. Let $c(p_i,\Delta) = \min \{ 1, p_i(\Delta)\}$. For each item-type $i$, there is an an edge from type $i$'s vertex to $t$ of capacity $(k_i - l_i)c(p_i,\Delta)$. Finally, for each active item, there is an edge from the corresponding agent to the type of the item of capacity $c(p_i,\Delta)$. 

All the active allocations are done by computing a maximum flow in this network. Specifically, the amount of flow from an agent $a$ to a type $i$ corresponds to the amount that $a$ spends on active items of type $i$.

\subsubsection{A Subroutine} 
\label{sec.subroutine}

We give a price-increase algorithm that takes as input a parameter $\Delta$, a price $\pp$ that supports a $\Delta$-allocation and the corresponding bang-per-buck $\bb$ of the allocation. 
The algorithm then returns a price which supports a full $\Delta$-allocation together with its bang-per-buck vector.
The algorithm is given in \autoref{alg:priceIncrease}. Note that $\Delta$ remains unchanged throughout the algorithm.

The first step of the algorithm computes a max-flow in $N(\pp,\bb)$. The amount of flow in the network together with the allocation of superior items gives a $\Delta$-allocation.
If the agents spend all of their budgets, we have a full $\Delta$-allocation. 
Step 2 of the algorithm returns the current price $\pp$ and allocation $\xx$ if that happens.
 
Step 3 finds a set  $X \subseteq A$ and $Y \subseteq B$ such that $\left( s \cup X \cup Y, t \cup (A \setminus X) \cup (B \setminus Y) \right)$ forms a min-cut with maximum number of vertices on $t$ side of the cut. 
Since it is a min-cut, all edges from $X$ to $I \setminus Y$ are saturated. 
Furthermore, all edges from $A \setminus X$ to $Y$ carry no flow, and all agents with surplus money are in $X$. 
Since the cut maximizes the number of vertices on the $t$ side, there is no \emph{tight set} $T \in Y$. 
We say that a set $T \in Y$ is \emph{tight} if in the current network $\Gamma(T) = S$, and the total capacity of edges in $(s,S)$ is at most the total capacity of edges in $(T,t)$. Here, $\Gamma(T)$ denote the set of agent-vertices in $X$ connected to $T$ through an edge in the network. Clearly, if there is a tight set $T \in Y$, the cut defined by $s \cup (X\setminus \Gamma(T)) \cup (Y \setminus T)$ 
must also be a min-cut with more vertices on the $t$ side. The following lemma gives a crucial observation about the two sets $X$ and $Y$. 

\begin{lemma} 
\label{lem:sets}
For all $y \in Y$, the edge $(y,t)$ is saturated. Furthermore, if the capacity of $(y,t)$ increases, some agents in $X$ can spend more money.
\end{lemma}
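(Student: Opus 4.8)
The plan is to exploit the two defining properties of the chosen min-cut: that it is a min-cut, and that among all min-cuts it maximizes the number of vertices on the $t$-side. First I would recall the standard max-flow/min-cut structure for the cut $\left(s\cup X\cup Y,\; t\cup(A\setminus X)\cup(I\setminus Y)\right)$: every edge crossing from the source side to the sink side is saturated, and every edge crossing backwards carries no flow. The edges leaving $Y$ toward the sink side are exactly the edges $(y,t)$ for $y\in Y$ (there are no edges from $I$ to $A$), so each such $(y,t)$ is saturated. Together with the backward-edge condition (edges from $A\setminus X$ into $Y$ carry zero flow), this already gives the first sentence of the lemma; I would also note in passing that all agents with surplus money must lie in $X$, since an agent with an unsaturated $(s,a)$ edge must be on the source side of any min-cut, or else the cut could be enlarged, contradicting minimality.

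For the second sentence I would argue by contradiction using the ``no tight set'' property. Suppose we increase the capacity of some edge $(y_0,t)$ for $y_0\in Y$ but no agent in $X$ can then spend more money. ``No agent in $X$ can spend more'' means that after the capacity increase the max-flow value does not increase, i.e.\ the old min-cut value is unchanged, so the cut $\left(s\cup X\cup Y,\dots\right)$ is still a min-cut in the new network and hence was already a min-cut in which the edges out of $Y$ are not the bottleneck in a usable way. Concretely, I would pass to the flow $f$ and look at the set of item-types in $Y$ that are ``reachable'' from the surplus along augmenting-type considerations; the cleanest route is: let $T\subseteq Y$ be the set of $y\in Y$ such that every agent in $X$ adjacent to $y$ (via an active edge) has its $(s,a)$ edge saturated and, recursively, is only connected through such saturated structure — i.e.\ $T$ is a maximal set with $\Gamma(T)\subseteq X$ and no augmenting path from $s$ reaches $T$ except through already-saturated source edges. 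Then $\Gamma(T)=S$ for the corresponding agent set $S$, and the only way the $(s,S)$ edges can all be saturated while the surplus cannot reach $T$ is that the total capacity on $(s,S)$ is at most the total capacity on $(T,t)$ — which is precisely the definition of $T$ being a tight set. This contradicts the choice of the min-cut as the one maximizing the number of $t$-side vertices (which we were told rules out tight sets in $Y$). Hence some agent in $X$ gains the ability to spend more after the capacity increase.

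The main obstacle I expect is making the second part fully rigorous: turning the informal statement ``if the capacity of $(y,t)$ increases, some agent in $X$ can spend more'' into a precise claim about max-flow values and then extracting a genuine tight set $T$ from the assumption of failure. The subtlety is that increasing one edge's capacity need not obviously produce a tight set directly at $y_0$ — one must close $y_0$ up under the $\Gamma$-adjacency/saturation relation to get a set $T$ that actually satisfies $\Gamma(T)=S$ and the capacity inequality $\sum_{a\in S} c(s,a)\le \sum_{i\in T} c(i,t)$. I would handle this by taking $T$ to be the inclusion-maximal subset of $Y$ with $\Gamma(T)\subseteq X$ all of whose source edges are saturated and such that no augmenting path from $s$ in the residual graph of $f$ enters $T$; a short residual-graph argument then shows both that $T\neq\emptyset$ under the contradiction hypothesis and that $T$ is tight, giving the desired contradiction. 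The rest is routine bookkeeping with the capacities $1-e_a$ and $(k_i-l_i)c(p_i,\Delta)$, which I would not spell out in detail.
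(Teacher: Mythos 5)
Your proposal is correct and rests on the same two pillars as the paper's proof: the max-flow/min-cut structure of the chosen cut and the ``no tight set'' consequence of maximizing the $t$-side, so the differences are essentially presentational. For the first claim you invoke the standard fact that every source-to-sink edge of a min cut is saturated, which is cleaner than the paper's route; the paper instead obtains saturation of $(y,t)$ as a byproduct of its argument for the second claim (a surplus agent with a residual path to $y$ would yield an augmenting path if $(y,t)$ were unsaturated). For the second claim you argue by contradiction --- if no agent in $X$ could spend more, a tight set would exist --- whereas the paper argues the contrapositive directly: it defines $R$ as the set of vertices of $X\cup Y$ admitting a directed path to $y$ in the residual graph of $N(\pp,\bb)\setminus\{s,t\}$, observes that the flow from $s$ into $R\cap X$ equals the flow from $R\cap Y$ into $t$, and concludes from the absence of tight sets that $R\cap X$ contains a surplus agent whose residual path to $y$ becomes augmenting once the capacity of $(y,t)$ grows. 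The ``closure'' you flag as the main obstacle is exactly this set $R$: defining it by residual reachability \emph{to} $y$ (rather than forward reachability from $s$) is what makes the $\Gamma$-adjacency condition and the capacity accounting come out with no extra bookkeeping, after which your plan fills in precisely as you anticipate.
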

\begin{proof}
Let $y$ be an arbitrary vertex in $Y$. Define a reachable subgraph $R$ as follows:
\[
R = \left\{ v \in X \cup Y: \exists \text{ a directed path from $v$ to $y$ in the residual graph of $N(\pp,\bb) \setminus \{s,t\}$} \right\}.
\]
In other words, $R$ is the set of vertices in $X \cup Y$ that are reachable from $y$ via paths alternating between edges in the reverse direction and edges carrying flow in the forward direction. 

Let $R_X = R \cap X$ and $R_Y = R \cap X$. Since all edges from $A \setminus X$ to $Y$ carry no flow, the total flow from $s$ to $R_X$ is equal to the total flow from $R_Y$ to $t$. Furthermore, since $s \cup X \cup Y$ defines a min-cut with maximum number of vertices on the $t$ side, the total capacity of edges from $s$ to $R_X$ must be greater than the total capacity of edges from $R_Y$ to $t$. It follows that there must be an agent in $R_X$ with surplus money. 

Let $x \in R_X$ be an agent with surplus money. From the definition of $R$, there is a residual path from $x$ to $y$ in $N(\pp,\bb) \setminus \{s,t\}$. Therefore, if $(y,t)$ is not saturated, there exists a residual path from $x$ to $t$. 
This contradicts the fact that the flow in $N(\pp,\bb)$ is a maximum flow. By the same reason, if the capacity of $(y,t)$ increases, 
$x$ can take the chance to spend more money.
\end{proof}

The final step increases price of the item-types in $Y$ and decreases the bang-per-buck of agents in $X$ in proportion. The increase in the prices of the types in $Y$ can allow agents with surplus money in $X$ to spend their remaining budgets. As the prices increase, the following events might happen:
\begin{enumerate}
	\item An inferior item of type $i$ in $I \setminus Y$ may become active for agent $a$ in $X$. This can happen because the bang-per-buck values of agents in $X$ keep decreasing. If this event occurs, we add the corresponding edge from $a$ to $i$ with capacity $c(p_i,\Delta)$ to the network.
	\item A superior item of type $i$ in $Y$ may become active for agent $a$ in $A \setminus X$. This can happen because the prices of types in $Y$ keep increasing. If this event occurs, we add the corresponding edge from $a$ to $i$ with capacity $c(p_i,\Delta)$. Also, since the item is no longer a superior item, we need to adjust the capacity of edges $(s,a)$ and $(i,t)$ accordingly. 
	\item The capacity of a type $i$ in $Y$ may increase. This can happen if $p_i$ is less than 1. If this event occurs, some agents with surplus money in $X$ have a chance to spend their budget. As a consequence, some sets in $Y$ might go tight.
\end{enumerate}

\begin{figure}[ht]
	
	\begin{algbox}
		$(\pp',\bb') = \textsc{PriceIncrease}\left( \Delta, \pp, \bb \right)$
		
		\textbf{Input:} Parameter $\Delta$,  price vector $\pp$ and a bang-per-buck vector $\bb$ that support a $\Delta$-allocation, valuation $u_{aij}$ for each $(a,i,j)$.
		
		\textbf{Output:} A price vector $\pp'$ and a bang-per-buck vector $\bb'$ that support a full $\Delta$-allocation.
		\begin{enumerate}
			\item Compute a max-flow in $N(\pp,\bb)$.
			\item If the agents spend all their money, return the current $\pp$ and $\xx$. 
			\item Let $X \subseteq A$ and $Y \subseteq B$ such that $( s \cup X \cup Y, t \cup A \setminus X \cup B \setminus Y)$ forms a min-cut with maximum number of vertices on 					$t$ side of the cut. Remove edges from agents in $X$ to item-types in $I \setminus Y$, and fully allocate the corresponding items.
			\item Increase prices of item types in $Y$ and decrease the bang-per-buck of agents in $X$ in proportion until one of these following happens: 
				\begin{enumerate}
					\item If an inferior item of type $i$ becomes active for agent $a$, add an edge from $a$ to $i$ with capacity $c(p_i,\Delta)$. Go to Step 1.
					\item If a superior item of type $i$ becomes active for agent $a$, add an edge from $a$ to $i$ with capacity $c(p_i,\Delta)$. Update the capacity 							of edges $(s,a)$ and $(i,t)$. Go to Step 1.
					\item If $c(p_i,\Delta)$ increases for some $i \in Y$, go to Step 1.
				\end{enumerate}
		\end{enumerate}
	\end{algbox}
	
	\caption{A Price-Increase Subroutine.}
	
	\label{alg:priceIncrease}
	
\end{figure}

\paragraph{Running Time} We give an upper bound on the running time of the algorithm in \autoref{alg:priceIncrease} as a function of the surplus money of the agents. 

\begin{lemma}
\label{lem:priceIncreaseRunningTime}
$\textsc{PriceIncrease}\left( \Delta, \pp, \bb \right)$ runs in poly$(K,s)$ time if the total surplus money of the agents is $s\Delta$.
\end{lemma}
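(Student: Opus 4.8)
The plan is to bound the number of iterations of the main loop of \textsc{PriceIncrease} (Steps 1--4), since each iteration is clearly $\mathrm{poly}(K)$: the network $N(\pp,\bb)$ has $O(n+m)=O(K)$ vertices (we may assume $n\le K$, else some agent gets nothing and the instance is trivial) and at most $O(nK)$ agent--type edges, so computing a maximum flow, extracting the min-cut with the most vertices on the $t$ side, and deciding which of the events (a)--(c) in Step~4 occurs first (a comparison of $O(nK)$ ratios together with the crossing times of the multiples of $\Delta$) are all $\mathrm{poly}(K)$. Hence it suffices to show the loop runs $\mathrm{poly}(K,s)$ times.

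First I would record the monotone invariants the algorithm maintains, and which are the engine of the count: base prices are nondecreasing and bang-per-buck values $b_a$ are nonincreasing (this is exactly how Step~4 moves them), so the money each agent has committed is nondecreasing and the total surplus is nonincreasing. I would then observe that the number of superior item-copies is nonincreasing: once a copy $(a,i,j)$ leaves the superior set it can re-enter only if $b_a$ drops below $u_{aij}/p_i$ while $p_i$ is essentially fixed, i.e.\ with $a\in X$ and $i\notin Y$; but Step~3 removes precisely those edges and permanently allocates the corresponding copies, so this never happens. Consequently event~(b) fires at most $K$ times over the whole call.

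Next I would handle event~(c). Each time $c(p_i,\Delta)$ increases for some $i\in Y$, Lemma~\ref{lem:sets} guarantees the capacity of $(i,t)$ grows and some agent in $X$ can then spend strictly more money; using the $\Delta$-discretization built into $c(p_i,\Delta)$ one shows that re-solving the max-flow either decreases the total surplus by at least $\Delta$, or else exhausts some agent's surplus (after which, by surplus monotonicity, that agent is permanently out of $X$). Since the surplus is $s\Delta$ initially and stays nonnegative, event~(c) occurs at most $s+K$ times. For event~(a) --- an inferior item becoming active --- I would bound a maximal run of consecutive type-(a) iterations (no (b) or (c) in between): each such iteration either fully allocates a fresh item-copy in Step~3 (at most $K$ times, as those copies then leave the instance) or strictly advances a monotone, polynomially bounded potential built from the min-cut value, the set of permanently allocated copies, and the set of agents whose surplus has been exhausted; so each run has length $\mathrm{poly}(K)$, and since consecutive runs are separated by type-(b) or type-(c) events there are $O(s+K)$ runs. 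Summing, the loop executes $\mathrm{poly}(K)\cdot O(s+K)=\mathrm{poly}(K,s)$ times, which gives the lemma.

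The main obstacle is the last point --- controlling a run of type-(a) iterations. This is where the SPLC generalization really bites: a single type $i$ can simultaneously be fully allocated to one agent as a superior copy, carry an active-item edge to a second agent, and be inferior for a third, so one cannot simply reuse the ``the equality graph grows monotonically'' argument for linear Fisher markets of \cite{DPSV,Orlin}; the progress argument must be carried out at the level of individual item-copies, and one must check that the edge deletions in Step~3 and the capacity updates triggered by event~(b) do not undo it. Nailing down the exact monotone potential for this step is the part I expect to require the most care.
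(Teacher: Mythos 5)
Your overall skeleton matches the paper's proof: bound the number of capacity-increase events via the surplus, then bound the number of item-activation events between consecutive capacity increases by $\mathrm{poly}(K)$. But there are two concrete problems. First, your claim that ``the money each agent has committed is nondecreasing and the total surplus is nonincreasing'' is false in this model: when a superior item $(a,i,j)$ of a type $i\in Y$ becomes active for an agent $a\in A\setminus X$ (your event~(b)), the forced payment $u_{aij}/b_a$ is released, the capacity of $(s,a)$ increases by $p_i$, and the agent's unspent budget goes \emph{up}. Your count of at most $s+K$ capacity-increase events silently relies on the surplus never increasing, so as written it does not go through; the paper closes exactly this hole by observing that the budget freed when a superior item becomes active can be absorbed in the very next max-flow computation (the new active edge to $i$ and the enlarged capacity of $(i,t)$ accommodate it), so these events do not force additional capacity increases. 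You would need to add that observation, or else replace it with a quantitative bound on the total surplus injected by event-(b) occurrences measured in units of $\Delta$ (note that ``at most $K$ such events, each adding at most one unit of budget'' gives $K/\Delta$, which is \emph{not} $\mathrm{poly}(K,s)$).

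Second, your bound on a maximal run of type-(a) iterations is left resting on an unspecified ``monotone potential built from the min-cut value,'' which you yourself flag as the delicate part. The paper's argument here is much more direct and you should use it: once an inferior item becomes active for an agent, it remains active for that agent (or is permanently allocated by Step~3) until the next capacity increase, because within $X\times Y$ prices and bang-per-buck values move in proportion and outside $X$ and $Y$ they do not move at all; hence each of the at most $nK$ agent--copy pairs can trigger at most one activation event per capacity-increase event, giving $\mathrm{poly}(K)$ per run with no potential function needed. With these two repairs your argument coincides with the paper's.
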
 

\begin{proof} Clearly, the first 3 steps of the algorithm run in polynomial-time. 
It is also easy to see that computing the prices and bang-per-buck values at which the next event happens requires polynomial-time. 
Therefore, it suffices to show that the number of events is polynomial in $K$ and $s$ as well.

First, consider an event in which the capacity $c(p_i,\Delta)$ increases for some type $i$. 
By \autoref{lem:sets}, if this event occurs, some agents with surplus money in $X$ have a chance to spend their remaining budget.
Therefore, $\Delta$ more budget is spent in the $\Delta$-allocation for the new prices, except the cases where some agents have less than $\Delta$ surplus budget. 
However, such cases can only happen at most $n$ times. 
Also, notice that the total amount of unspent budget can only go up when a superior item becomes active. 
However, in that case the increase in unspent budget can be spent in the next max-flow computation.
It follows that the number of capacity increase events is bounded by poly$(n,s)$.

Consider an event in which an inferior item becomes active. If this event occurs, the item will remain active for the corresponding agent until some capacities increase. Therefore, there can be at most poly$(K)$ such events per capacity increase event. A similar argument can be applied to the case in which a superior item becomes active. It follows that the total running time of $\textsc{PriceIncrease}$ is polynomial in $s$ and $K$.
\end{proof}

\paragraph{Correctness}
\begin{lemma}
\label{lem:priceIncreaseCorrectness}
$\textsc{PriceIncrease}\left( \Delta, \pp, \bb \right)$ returns a full $\Delta$-allocation with the corresponding price $\pp$ and bang-per-buck $\bb$.
\end{lemma}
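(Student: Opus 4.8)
The plan is to establish the loop invariant that whenever control reaches Step~1, the current price vector $\pp$, the current bang-per-buck vector $\bb$, and the current allocation $\xx$ (the superior items fully assigned, together with the active allocations coming from the last max-flow) form a $\Delta$-allocation; and then to observe that the subroutine can only return in Step~2, where the Step-2 guard additionally forces every budget to be fully spent, so the returned triple is a \emph{full} $\Delta$-allocation. Termination of the subroutine — that Step~2 is eventually reached — is already contained in Lemma~\ref{lem:priceIncreaseRunningTime}, whose proof bounds the number of status-change and capacity-increase events and shows that the surplus of each agent is driven to zero; so the real content here is the invariant.

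I would prove the invariant by induction on the number of executions of Step~1. The base case is the hypothesis on the input $(\pp,\bb)$. For the inductive step, assume $(\pp,\bb,\xx)$ is a $\Delta$-allocation at the start of an iteration. The max-flow computation in Step~1 keeps the spending admissible by the very definition of $N(\pp,\bb)$: superior items $(a,i,j)$ are fully bought for $u_{aij}/b_a$, each active edge $(a,i)$ has capacity $c(p_i,\Delta)=\min\{1,p_i(\Delta)\}$ so the flow on it is an admissible base spending of $a$ on active items of type $i$, and the edge $(i,t)$ of capacity $(k_i-l_i)c(p_i,\Delta)$ caps the active base spending on type $i$. Hence the per-type base spending stays in $[\,k_i\min\{1,p_i\},\,k_i\min\{1,p_i(\Delta)\}\,]$: the upper bound is immediate from the capacities, and for the lower bound one uses that a superior item contributes base spending $p_i\ge\min\{1,p_i\}$ and that, for a type in $Y$, the edge $(i,t)$ is saturated by Lemma~\ref{lem:sets}. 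In Step~3 the edges from $X$ to $I\setminus Y$ are saturated, since they cross the min-cut, so deleting them and fully allocating those items is exactly the transition of active items of $X$ into superior status — consistent because the bang-per-buck of every $a\in X$ is about to drop — and, since $p_i\le p_i(\Delta)$, this does not move any per-type base spending out of the allowed interval. In Step~4 the prices of $Y$ are scaled up by a common factor $1+\varepsilon$ and the bang-per-buck values of $X$ by $1/(1+\varepsilon)$, which preserves $b_a=u_{aij}/p_i$ for every active pair $(a,i,j)$ with $a\in X,\ i\in Y$; active pairs with $a\in A\setminus X,\ i\in Y$ carry no flow, since no edge from $A\setminus X$ to $Y$ is used in the min-cut, so it is harmless that they become inferior; and each of the three events restores the network description before the jump back to Step~1 — (a) adds the newly active edge, (b) additionally removes the item from the superior set and corrects the capacities of $(s,a)$ and $(i,t)$, and (c) is a capacity increase, which by Lemma~\ref{lem:sets} only creates room for surplus agents of $X$ to spend. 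Throughout the continuous motion of $\pp$ the monotonicity $p_i(\Delta)\ge p_i$, hence $\min\{1,p_i(\Delta)\}\ge\min\{1,p_i\}$, is what keeps both base-spending bounds satisfied even though $c(p_i,\Delta)$ jumps at the multiples of $\Delta$.

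Combining the two parts: the subroutine returns only in Step~2, where the guard ``the agents spend all their money'' holds, and by the invariant the returned $\xx$ is a $\Delta$-allocation; therefore it is a full $\Delta$-allocation with the returned price $\pp$ and bang-per-buck $\bb$, which is the claim.

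I expect the main obstacle to be the Step~4 case analysis, where one must simultaneously track the continuous evolution of $\pp$ and $\bb$, the discrete jumps of $c(p_i,\Delta)$, and the fact that the flow maintained during the continuous phase is feasible and admissible but need not be maximum until the next Step~1. Event~(b) — a superior item of a type in $Y$ going active for an agent outside $X$, which changes $l_i$ and the capacities of both $(s,a)$ and $(i,t)$ — is the most delicate: one has to check that these updates cannot push any per-type base spending outside $[\,k_i\min\{1,p_i\},\,k_i\min\{1,p_i(\Delta)\}\,]$. Everything else reduces to the min-cut structure recorded in Lemma~\ref{lem:sets} together with the monotonicity of $p\mapsto p(\Delta)$.
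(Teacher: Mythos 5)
Your proposal follows essentially the same route as the paper: maintain the invariant that a $\Delta$-allocation is preserved across each event of Step~4 (edges from $X$ to $Y$ persist because prices and bang-per-buck values move in proportion, the vanishing edges from $A\setminus X$ to $Y$ carry no flow, and Lemma~\ref{lem:sets} handles the capacity-increase event), then conclude from the Step-2 guard that the returned allocation is full. The one step you flag as delicate but do not carry out --- event (b) --- is closed in the paper by a short capacity account: the capacity of $(s,a)$ increases by $p_i$, that of $(i,t)$ by $p_i(\Delta)$, and the new edge has capacity exactly $p_i(\Delta)$, so the next max-flow computation restores a $\Delta$-allocation.
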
 

\begin{proof} Since Step 2 of the algorithm guarantees that it always terminates with the agents spending all of their money, it suffices to show that 
the algorithm maintains a $\Delta$-allocation at every step. 
To begin, the input price $\pp$ and bang-per-buck $\bb$ are given such that they support a $\Delta$-allocation. 
Therefore, constructing $N(\pp,\bb)$ and finding a maximum flow in this network give a $\Delta$-allocation. 

We will show that when each event in Step 4 happens, the current price $\pp$ and bang-per-buck $\bb$ still support a $\Delta$-allocation. 
Note that whenever we raise the price of items in $Y$, we also decrease the bang-per-buck of agents in $X$ by the same factor. 
Therefore, all edges from $X$ to $Y$ remain. 
Moreover, all edges that disappear are from $A \setminus X$ to $Y$ and carry no flow. 
It follows that the spending that was computed before we raised prices is still a valid spending.
 
If an inferior item becomes active, all the capacities remain unchanged, and the $\Delta$-allocation that the algorithm had before remains a $\Delta$-allocation.

%If a superior item of type $i$ becomes active, the capacities of edges $(s,a)$ and $(i,t)$ increase by the same amount of $p_i(\Delta)$, and the edge from $a$ to $i$ also have capacity of exactly $p_i(\Delta)$.
If a superior item of type $i$ becomes active, the capacity of $(s,a)$ increases by $p_i$, and the capacity of $(i,t)$ increases by $p_i(\Delta)$. 
Moreover, the edge from $a$ to $i$ also have capacity of exactly $p_i(\Delta)$.
Therefore, the next maximum flow computation will give a $\Delta$-allocation. 

Finally, if the capacity of some item-type $i$ increases, the algorithm still maintains a $\Delta$-allocation. 
To see this, notice that when this event occurs, $p_i$ is a multiple of $\Delta$, and $p_i + \Delta \leq 1$. 
Moreover, the capacity of $(i,t)$ increases by exactly $k_i' \Delta$ where $k_i'$ is the number of remaining items of type $i$.  
By \autoref{lem:sets}, the edge $(i,t)$ must be saturated with flow of value $k_i' p_i$ before its capacity increases. 
After the capacity increases, the flow on $(i,t)$ must be at least $k_i' p_i$ and at most $k_i' p_i(\Delta)$.
This value of flow corresponds to the amount of base spending on active items of type $i$.
Since the amount of base spending on each superior item of type $i$ is exactly $p_i$, the total base spending on all items of type $i$ is at least $k_i p_i$ and at most $k_i p_i(\Delta)$.
\end{proof}

\subsubsection{Polynomial Time Algorithm} 
\label{alg.poly}

In this section, we give a polynomial-time algorithm for computing a spending-restricted equilibrium (\autoref{alg:main}).

\begin{figure}[ht]
	
	\begin{algbox}
		$(\pp,\xx) = \textsc{ScalingAlgorithm}$
		
		\textbf{Input:} Valuation $u_{aij}$ for each $(a,i,j)$.
		
		\textbf{Output:} Spending-restricted price $\pp$ and allocation $\xx$.
		\begin{enumerate}
			\item Let $\Delta = O(1/K)$. Initialize $\pp$ and $\bb$.
			\item For $r = 1$ to $r \in O(K \log V_{\text{max}})$ do:
			\begin{enumerate} 
				\item $(\pp,\bb) \leftarrow \textsc{PriceIncrease}\left( \Delta, \pp, \bb \right)$.
				\item $\Delta \leftarrow \Delta / 2$.
			\end{enumerate}
		\end{enumerate}
	\end{algbox}
	
	\caption{A Polynomial Time Algorithm for Computing a Spending-Restricted Equilibrium.}
	
	\label{alg:main}
	
\end{figure}
The algorithm starts with $\Delta = O(1/K)$. Specifically, $\Delta$ is set to be the largest power of 2 that is at most $1/2K$.
 
The first step of the algorithm computes initial price $\pp$ and bang-per-buck $\bb$ together with a $\Delta$-allocation corresponding to these prices and bang-per-buck values. We will explain the details of this step later. 

The algorithm then repeatedly calls $\textsc{PriceIncrease}$ on the current $(\pp,\bb)$ and halves $\Delta$ in each scaling phase. Notice that when $\Delta$ is halved, the capacities of some edges 
%together with the spending allowed on superior items 
may decrease. 
As a result, some agents may have surplus money. 
However, the algorithm still maintains a $\Delta$-allocation with respect to the new $\Delta$.
After $O(K \log V_{\text{max}})$ scaling phases, the algorithm terminates with a full $\Delta$-allocation for $\Delta = O(2^{-K}/V_{\max})$.

\paragraph{Initialization} We initialize price $\pp$ and bang-per-buck $\bb$ for which there exists a $\Delta$-allocation. 
We assume that $\Delta$ is at most $1/2K$. 

To begin, we pick an arbitrary agent $i$ and find an appropriate $\pp$ and $b_i$ such that $i$ demands all the items that he derives positive utility from. 
This can be done by setting $p_i = u_{aik_{i}}/{M}$ for a large number $M$ and $b_a$ small enough. 
$M$ is chosen large enough so that $a$ only spends at most a half of his budget. 
Furthermore, the bang-per-buck of other agents is also set to be very large, and hence they only demand items of type $i$ such that $p_i = u_{aik_{i}} = 0$. 

If $a$ derives positive utility from all items, that is, $u_{aik_{i}} > 0$ for all $i$, then we are done since $a$ has enough money to pay an extra amount of $\Delta$ for each item.  Also, if there is no more demand on a type $i$ with zero price,  then we can leave $p_i = 0$ since items of type $i$ don't have to be fully sold. 
Therefore, we may assume that there is at least one agent $a'$ other than $a$ demanding an item of type $i$ with $p_i=0$. 
In this case, we raise $p_i$ by a small amount and set $b_{a'}$ so that $a'$ only demands (some of) the remaining items of type $i$. 

We then continue in this manner until all items with positive price are fully sold. The price $\pp$ supports a $\Delta$-allocation since all prices can be scaled to small values such that no agent spends more than $1/2$, and hence agents can pay an extra amount of $\Delta$ for each item.

\paragraph{Running Time}
\begin{theorem}
	$\textsc{ScalingAlgorithm}$ returns a full $\Delta$-allocation for $\Delta = O(2^{-K}/V_{\max})$ in polynomial time.
\end{theorem}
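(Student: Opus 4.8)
I would prove the theorem by induction over the scaling phases $r=1,2,\dots$, maintaining the invariant that at the start of phase $r$ the current pair $(\pp,\bb)$ supports a $\Delta_r$-allocation, where $\Delta_r=\Delta_0/2^{r-1}$ and $\Delta_0$ is the initial value, and moreover that the total surplus money divided by $\Delta_r$ is bounded by a polynomial in $n$ and $K$ (in fact by $2K$ for $r\ge 2$). The base case is exactly the Initialization step: its construction scales all prices down far enough that every agent spends at most half his budget while every positively priced item is fully sold, so the total base spending on each type $i$ equals $k_i\min\{1,p_i\}$, which lies in the required window $[\,k_i\min\{1,p_i\},\,k_i\min\{1,p_i(\Delta_0)\}\,]$; hence $\pp$ supports a $\Delta_0$-allocation, and the surplus is at most $n$, i.e.\ at most $nK$ after dividing by $\Delta_0=\Theta(1/K)$.

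\textbf{Inductive step.} Assuming the invariant at phase $r$, \autoref{lem:priceIncreaseCorrectness} shows that $\textsc{PriceIncrease}(\Delta_r,\pp,\bb)$ returns a pair $(\pp,\bb)$ supporting a \emph{full} $\Delta_r$-allocation (zero surplus). It remains to check that halving $\Delta$ to $\Delta_{r+1}=\Delta_r/2$ lands us back inside the hypothesis. Prices and bang-per-buck values are untouched, so the superior/active/inferior classification of every $(a,i,j)$ and the quantities $e_a,l_i$ are unchanged; only the per-type cap $c(p_i,\Delta)=\min\{1,p_i(\Delta)\}$ can shrink, and it shrinks by at most $p_i(\Delta_r)-p_i(\Delta_{r+1})\le\Delta_r$, using $0\le p_i(\Delta_r)-p_i(\Delta_{r+1})<\Delta_r$ (the multiples of $\Delta_r$ are among those of $\Delta_{r+1}$). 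The lower bound $k_i\min\{1,p_i\}$ of the $\Delta$-allocation window is $\Delta$-independent, hence still satisfiable; to restore the upper bound I would cancel flow along $s$--$a$--$i$--$t$ paths until the base spending on each type $i$ drops to at most $k_i\min\{1,p_i(\Delta_{r+1})\}$, which is always $\ge k_i\min\{1,p_i\}$, so we never need to go below the lower bound. Cancelling flow only decreases some $x_{aij}$ on active items and frees budget, so the spending remains admissible; the total budget freed is at most $\sum_i k_i\bigl(c(p_i,\Delta_r)-c(p_i,\Delta_{r+1})\bigr)\le\sum_i k_i\Delta_r=K\Delta_r=2K\Delta_{r+1}$. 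Thus at the start of phase $r+1$ we again have a $\Delta_{r+1}$-allocation with surplus at most $2K\Delta_{r+1}$, closing the induction.

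\textbf{Running time and the final $\Delta$.} Each phase is one call to $\textsc{PriceIncrease}$ plus $O(1)$ bookkeeping for the halving and the flow cancellation; by \autoref{lem:priceIncreaseRunningTime} the call runs in $\mathrm{poly}(K,s)$ time, where $s$ is the surplus divided by the current $\Delta$, and $s=O(nK)$ by the invariant, so every phase is polynomial. The loop runs $O(K\log V_{\text{max}})$ times and $\log V_{\text{max}}$ is linear in the bit-size of the input, so the total running time is polynomial. Since $\Delta$ halves each phase and $\Delta_0=\Theta(1/K)$, after $O(K\log V_{\text{max}})$ phases we have $\Delta=\Theta(1/K)\cdot 2^{-\Omega(K\log V_{\text{max}})}=O(2^{-K}/V_{\text{max}})$ for a large enough hidden constant, and by \autoref{lem:priceIncreaseCorrectness} the last returned pair supports a full $\Delta$-allocation. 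This gives the theorem.

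\textbf{Main obstacle.} The delicate point is the halving step: one must verify both that reducing $\Delta$ by a factor of two keeps $\pp$ a valid price vector for a $\Delta$-allocation (it does, because only the upper caps move, and in the favorable direction) and that it creates only $O(K\Delta)$ new surplus, so that $s$ stays polynomial and \autoref{lem:priceIncreaseRunningTime} keeps each phase cheap. Handling the bookkeeping of superior versus active base spending together with the $\min\{1,\cdot\}$ caps exactly right --- in particular ensuring the window's lower bound is never violated when flow is cancelled --- is where the care is needed; everything else follows from the two lemmas on $\textsc{PriceIncrease}$.
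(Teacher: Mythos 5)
Your proposal is correct and follows essentially the same route as the paper's proof: bound the surplus created at initialization and at each halving of $\Delta$ (by roughly $K\Delta$, since each cap $c(p_i,\Delta)$ drops by at most one $\Delta$-increment per item), then invoke \autoref{lem:priceIncreaseCorrectness} and \autoref{lem:priceIncreaseRunningTime} to conclude each of the $O(K\log V_{\max})$ phases is polynomial. Your explicit flow-cancellation step for restoring the upper end of the $\Delta$-allocation window is a detail the paper leaves implicit, but it does not change the argument.
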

\begin{proof}
	Since the initial value of $\Delta$ is $O(1/K)$, by \autoref{lem:priceIncreaseRunningTime}, 
	the first call of $\textsc{PriceIncrease}$ takes polynomial time. 
	We will show that each subsequent call of $\textsc{PriceIncrease}$ runs in polynomial time as well. 
	\autoref{lem:priceIncreaseCorrectness} guarantees that at the beginning of each subsequent call of $\textsc{PriceIncrease}$ on parameter $\Delta$, the price $\pp$ and bang-per-buck $\bb$ support a $2\Delta$-allocation.
	When the value changes from $2\Delta$ to $\Delta$, for each $i$, $p_i(\Delta)$ can either decrease by $\Delta$ or remain unchanged.
	It follows that the total unspent budget can be at most $K \Delta$.
	\autoref{lem:priceIncreaseRunningTime} can be applied again to show that each subsequent call of $\textsc{PriceIncrease}$ terminates in polynomial time. 
\end{proof}

\subsection{Rounding a Spending-Restricted Solution}
\label{sec.round}

\subsubsection{Upper Bound on OPT}
\label{sec.upper}

Since scaling the valuations of the agents does not affect the solution of our problem, given a spending-restricted equilibrium price vector $\pp$, we can always scale the valuations of the agents such that the bang-per-buck of each agent from the equilibrium allocation is exactly 1. We say that the valuations of the agents are \emph{normalized} for $\pp$. The following lemma gives an upper bound for the NSW of the optimal integral solution based on spending-restricted prices. This is a generalization of the idea used in \cite{CG-Nash}. We say that an item of type $i$ is a \emph{high-price} item if $p_i > 1$ and a \emph{low-price} item otherwise. Also, we denote $H(\pp)$ by the set of high-price item types.
\begin{lemma}
\label{lem:upperbound}
	Let $\pp$ be a spending-restricted price vector and $\xx^*$ be an optimal integral solution. If the valuations of the agents are normalized for $\pp$ then
	\[ \left( \prod_a u_a(\xx^*) \right)^{1/n} \leq \left( \prod_{i \in H(\pp)} p_i^{k_i} \right)^{1/n}.\]
\end{lemma}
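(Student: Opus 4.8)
The plan is to bound the optimal integral Nash welfare by a product over high-price items, using the spending-restricted structure.

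\paragraph{Setup.} Fix the optimal integral allocation $\xx^*$. Since valuations are normalized so that the bang-per-buck $b_a = 1$ for every agent $a$ in the equilibrium, the equilibrium admissible-spending condition says that every item $(a,i,j)$ with $u_{aij}/p_i > 1$ (superior) is fully taken and charged $u_{aij}$, every active item has $u_{aij} = p_i$, and inferior items have $u_{aij} < p_i$. In particular, for \emph{every} triple $(a,i,j)$ we have $u_{aij} \le p_i$ whenever $(a,i,j)$ is not superior, and for superior items $u_{aij} > p_i$ but then the money spent on that full item is exactly $u_{aij}$. The key monotonicity to exploit: for a low-price type $i$ (i.e. $p_i \le 1$) and any agent $a$, the marginal utilities satisfy $u_{ai1}\ge u_{ai2}\ge\cdots$, and I want to argue the total utility any agent can extract from low-price types is controlled by that agent's budget of \$1.

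\paragraph{Main argument.} I would split $u_a(\xx^*) = u_a^{H}(\xx^*) + u_a^{L}(\xx^*)$ into the utility from high-price types $H(\pp)$ and from low-price types. For the low-price part, the crucial claim is that $u_a^{L}(\xx^*) \le 1$ for every agent, or more precisely that the total utility extractable from low-price items is at most the total money, \$$n$. Here is why per-item: if type $i$ has $p_i \le 1$, the equilibrium sells $\min(p_i,1)\cdot k_i = k_i p_i$ worth of base spending and in fact (by the spending-restricted condition with $p_i\le 1$) all $k_i$ units are sold, so each low-price item of type $i$ is active or superior. For an active item $u_{aij}=p_i\le 1$; by concavity (sorted marginals), agent $a$'s utility from taking $\ell$ copies of type $i$ is $\sum_{j\le\ell} u_{aij}$, and since the bang-per-buck is $1$ these are exactly the amounts of money spent. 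Summing over all agents and all low-price types: the total low-price utility in \emph{any} integral allocation is at most $\sum_{i\notin H(\pp)} k_i p_i$ — wait, that's not obviously $\le n$. The cleaner route: for each agent $a$ individually, consider the bundle $\xx^*$ gives $a$; the utility $a$ derives from any set of items is at most the amount of money $a$ would need to spend to buy them at the equilibrium (admissible) prices, which is at most $a$'s budget... no, $\xx^*$ need not be affordable. So instead I use: $u_a(\xx^*) \le \sum_{i}(\text{utility of the } k_i \text{ best copies of type } i \text{ for } a)$, and for low-price types each such copy has $u_{aij}\le p_i \le 1$; combined over the whole allocation and using the AM-GM / concavity of $\log$, I bound $\prod_a u_a(\xx^*)$.

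\paragraph{Expected shape of the proof and the obstacle.} I expect the clean statement to be: $u_a(\xx^*) \le \sum_{i\in H(\pp)} (\text{what } a \text{ gets from } i) + 1$, and then since $\sum_i (\text{what } a \text{ gets from high-price } i) \le \sum_{i\in H(\pp)} p_i^{k_i}$-type bound is false dimensionally, the real inequality must be multiplicative via $\prod_a u_a(\xx^*) \le \prod_{i\in H(\pp)} p_i^{k_i}$ directly, proved by: each high-price item of type $i$ is superior, hence fully allocated to a \emph{single} agent in equilibrium, and the utility any agent gets from a high-price item of type $i$ is at most $p_i$ times (something), while each agent's total is at least... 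Honestly the mechanism must be: assign to each high-price unit a distinct "slot", show $\prod_a u_a(\xx^*)$ is maximized when each agent gets exactly the high-price units whose $p_i$ dominate, then bound the product of per-agent utilities by $\prod p_i^{k_i}$ using that there are $\sum_{i\in H} k_i$ high-price units and rearrangement. \textbf{The main obstacle} will be handling the low-price items correctly: showing that redistributing or ignoring the low-price contribution cannot increase the Nash welfare beyond $\prod_{i\in H(\pp)} p_i^{k_i}$, which requires carefully using the spending restriction (total spending on each type $\le 1$ per unit, equivalently $\le k_i$ per type) together with the normalization $b_a=1$ to convert "utility" into "money" and then apply a counting/convexity bound — exactly the step where the SPLC generalization departs from the additive case of \cite{CG-Nash}, since here the same agent's marginal utilities for one type interact.
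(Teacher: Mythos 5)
There is a genuine gap: your proposal correctly assembles several ingredients (normalization turns utility into money, each high-price unit is indivisible and lands on a single agent, and the product is maximized by a balanced configuration), but it is missing the one idea that makes the paper's proof work, and you explicitly get stuck at exactly that point. The per-agent bound you first reach for, $u_a^{L}(\xx^*)\le 1$, is false in general (an optimal integral allocation can dump many low-price items on one agent; it need not be affordable), and you notice this but never find a replacement. The replacement is not a per-agent bound at all: it is an \emph{aggregate} utilitarian bound. The paper completes the equilibrium allocation $\xx$ to an allocation $\yy$ by handing the unsold remainder of each high-price unit of type $i$ to an agent already spending on it, and observes that $\yy$ maximizes the \emph{total} utility $\sum_a u_a(\cdot)$ over all allocations: with normalized valuations every unit of type $i$ generates at least $p_i$ utility in $\yy$, and any agent who could derive more than $p_i$ from a copy of type $i$ (a superior slot) already receives that full copy in $\yy$. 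Since each agent's utility in $\xx$ is exactly $1$ (budget of \$1 spent at bang-per-buck $1$) and each unsold remainder adds $p_i-1$, this gives $\sum_a u_a(\zz)\le n-\sum_{i\in H(\pp)}k_i+\sum_{i\in H(\pp)}k_ip_i$ for every integral $\zz$. This single inequality is what controls the low-price contribution globally, which is the "main obstacle" you name but do not resolve.

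Your second step is also only sketched, though it is the easier half. Given the aggregate bound, one argues: there are at least $n-\sum_{i\in H(\pp)}k_i$ agents who receive no high-price item; if an agent holding a high-price item also holds anything else, transferring that extra value to an under-served agent (one with utility below $1$, which must exist) increases the product; hence at the maximum each high-price item of type $i$ contributes exactly $p_i$ to one agent's utility and every other agent has utility exactly $1$, yielding $\prod_a u_a(\zz)\le\prod_{i\in H(\pp)}p_i^{k_i}$. Your proposal gestures at "rearrangement" and "counting/convexity" here but never pins down the exchange argument or why the residual agents' utilities cap at $1$ each --- which again follows only from the aggregate bound you are missing. As written, the proposal does not constitute a proof.
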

\begin{proof} First we give a bound on the sum of the agents' utilities in any allocation based on the spending-restricted price vector $\pp$. Consider a fractional spending-restricted allocation $\xx$ corresponding to $\pp$. Since valuations of the agents are normalized, each agent receives exactly 1 unit of utility in $\xx$. However, $\xx$ may not fully allocate the items to the agents, since high-price items may not be completely sold. Each one of the high-price items generates 1 unit of utility in $\xx$ since the total spending on it is precisely 1. Let $\yy$ be an allocation in which we allocate the rest of each high-price item of type $i$ to one of the agents spending on it, hence
generating $p_i - 1$ more utility. Therefore, the total utility that all the items generate in $\yy$ is: 
\[ n +  \sum_{i \in H(\pp)} ( p_i - 1)  =  n - \sum_{i \in H(\pp)} k_i + \sum_{i \in H(\pp)} k_i p_i. \]

We will show that the total utility of all the agents in any allocation cannot be larger than this. Consider the items of type $i$. In $\yy$, each one of those items generates either $p_i$ or more than $p_i$ utility. Moreover, any agent that can derive more than $p_i$ utility from an item actually receives the item in $\yy$. Therefore, $\yy$ allocates the items to the agents such that the total utility all the items generate is maximized. It follows that for any integral allocation $\zz$,
\[ \sum_a u_a(\zz) \leq \sum_a u_a(\yy) = n - \sum_{i \in H(\pp)} k_i + \sum_{i \in H(\pp)} k_i p_i.\]

Notice that $\sum_{i \in H(\pp)} k_i$ is the total number of high-price items. Therefore, 
$n - \sum_{i \in H(\pp)} k_i$
is a lower bound on the number of agents receiving only low-price items. Moreover,
in any integral allocation $\zz$, a high-price item must be assigned to only one agent.
 
We claim that the product $\prod_a u_a(\zz)$ is maximized if 
each high-price item of type $i$ generates $p_i$ utility, 
no agent receives more than one high-price item,
and all other agents who do not receive any high-price item derive exactly 1 unit of utility. 
In that case, the product $\prod_a u_a(\zz)$ is exactly $\prod_{i \in H(\pp)} p_i^{k_i}$.

Next, we prove the claim stated in the previous paragraph. 
We may assume that a high-price item of type $i$ generates $p_i$ utility since this can only increase $\prod_a u_a(\zz)$.
Suppose there is an agent $a$ receiving a high-price item together with one or more items. 
It follows that the total utility of agents receiving only low-price items is strictly less than $n - \sum_{i \in H(\pp)} k_i $.
Since there are at least $n - \sum_{i \in H(\pp)} k_i$ of them, there must be an agent $a'$ having less than 1 utility. 
Transferring all value but the high-price item from $a$ to $a'$ makes the product $\prod_a u_a(\zz)$ larger. 
Therefore, if an agent receive a high-price item, he should not receive anything else. 
For agents receiving only low-price items, their total utility is at most $n - \sum_{i \in H(\pp)} k_i$ and
the product is maximized if each of them derives exactly 1 unit of utility.
\end{proof}

\subsubsection{Rounding}
\label{sec.round-alg}	

We give a generalized version of the rounding procedure proposed in \cite{CG-Nash}. 

\paragraph{Algorithm} The first step of the of the rounding algorithm constructs a bipartite spending graph $G$ whose vertices are agents and individual items as follows.  
For each superior item of type $i$, add an edge between $i$ and the corresponding agent $a$, and let $a$ spend $p_i + q_{aij}$  on the item. Note that $a$ is the only agent spending on this item.  For the remaining items of type $i$, partition their allocation into units arbitrarily. Add edges between each unit and the agents who receive parts
of this unit. Note that the total spending on each item is at most 1 in $G$.

After that, we invoke the rounding procedure of \cite{CG-Nash} on the above spending graph. The algorithm is given in \autoref{alg:marketRounding}, and the steps are explained below. 

By rearranging the spending of the agent, we may assume that $G$ is a forest of trees. Moreover, each tree in the forest must contain an agent-vertex since an item is assigned to at least one agent in $\xx$. 

Step 2 of the algorithm chooses an arbitrary agent-vertex from each tree to be the root of the tree. 
Steps 3 and 4 then assign any leaf-item and item with price less than 1/2 to its parent-agent. 

Step 5 of the algorithm computes the optimal matching of the remaining items to the agents, given the assignments that have been made in the previous steps. 
This can be done by computing a matching that maximizes sum of the logarithms of the valuations, which is equivalent to maximizing the product of the valuations.

\begin{figure}[ht]
	
	\begin{algbox}
		$\zz = \textsc{SpendingRestrictedRounding}\left( \pp, \xx \right)$
	            
		\textbf{Input:} Spending-restricted equilibrium price $\pp$, and the corresponding fractional allocation $\xx$. 
		
		\textbf{Output:} Integral allocation $\zz$.
		\begin{enumerate}
			\item Compute a spending graph $G$ from agents to items according to $\xx$.
			\item Choose a root-agent for each tree in the $G$. 
			\item Assign any leaf-item to the parent-agent.
			\item Assign any item $j$ of type $i$ with $p_i \leq 1/2$ to the parent-agent.
			\item Compute the optimal matching of the remaining items to the adjacent agents.
			\item Return the obtained integral allocation.
		\end{enumerate}
	\end{algbox}
	
	\caption{Algorithm for Rounding a Spending-Restricted Fractional Allocation.}
	
	\label{alg:marketRounding}
	
\end{figure}

\paragraph{Approximation Guarantee}
We show that the rounding algorithm gives a factor 2 approximation by constructing an instance of NSW problem under 
linear utilities that has the same upper bound as the one in \autoref{lem:upperbound}. 
A solution under linear utilities obtained by the rounding procedure in \autoref{alg:marketRounding} can then be viewed as a solution under SPLC utilities. 
Since the rounding procedure gives a solution under linear utilities that is at least a factor 2 of the upper bound in \autoref{lem:upperbound} (as shown in \cite{C+}),
it will follow that the same solution is also a factor 2 approximation under SPLC utilities.

\paragraph{An NSW Instance under Linear Utilities} Given a spending restricted allocation $\xx$ and corresponding prices $\pp$ under SPLC utilities, we build another instance 
$\mathcal{I}_{{linear}}$ of NSW under linear utilities. 
The instance has the same set of agents and items as the original instance $\mathcal{I}_{{splc}}$ under SPLC utilities. 
However, valuations have to be redefined as follows:

\begin{enumerate}
\item For each agent $a$, consider each item $(a,i,j)$ assigned (fractionally) to $a$.
\begin{enumerate}
	\item If it is a superior item, let it have a valuation of $p_i + q_{aij}$ for $a$ and 0 for all other agents. 
	\item If it is an active item, let it have a valuation of $p_i$ for $a$. 
\end{enumerate}
\item Set all remaining valuations to 0.
\end{enumerate} 

\paragraph{Spending-Restricted Model under Linear Utilities} Since a market under linear utilities is a special case of the one under SPLC utilities, our definition of spending-restricted allocation in Section~\ref{sec.model} also applies to the linear utilities case.  
To be precise, a spending-restricted equilibrium under linear utilities is defined as a fractional solution 
$\xx$ and a price vector $\pp$ such that every agent
spends all of his budget on his maximum bang-per-buck items at price $\pp$, and the total spending on each item $i$ is equal to $\min(1,p_i)$.
Also, \autoref{lem:upperbound} can be applied to the linear utilities case.

\begin{lemma}
\label{lem:reduction}
Let $\xx$ and $\pp$ be a spending-restricted solution for $\mathcal{I}_{splc}$. Then
$\xx$ is also a spending restricted solution for $\mathcal{I}_{linear}$. 
Moreover, \autoref{lem:upperbound} gives the same upper bound for both $\mathcal{I}_{{splc}}$ and $\mathcal{I}_{{linear}}$.
\end{lemma}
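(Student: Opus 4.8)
The plan is to verify the two assertions separately, using the defining properties of a spending-restricted equilibrium for $\mathcal{I}_{splc}$ as a black box. For the first assertion, I would fix $\xx$ and $\pp$ a spending-restricted solution for $\mathcal{I}_{splc}$ with valuations normalized so that the bang-per-buck $b_a = 1$ for every agent $a$ (as in Section~\ref{sec.upper}); this is harmless since scaling valuations changes neither instance's equilibria. Under this normalization, for each agent $a$ and each item $(a,i,j)$ that $a$ holds a positive fraction of: if $(a,i,j)$ is superior then $u_{aij}/p_i > 1$ and $a$ spends $u_{aij} = p_i + q_{aij}$ for a full unit; if $(a,i,j)$ is active then $u_{aij}/p_i = 1$ and $a$ spends $p_i x_{aij}$. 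In $\mathcal{I}_{linear}$ we have set $v_{aij} = p_i + q_{aij}$ (superior) or $v_{aij} = p_i$ (active), so in either case $v_{aij}/(\text{price of that item copy to } a) = 1$. I would then argue that $1$ is in fact the \emph{maximum} bang-per-buck available to $a$ in $\mathcal{I}_{linear}$: any item copy on which $a$ gets a nonzero valuation is one of $a$'s own fractional holdings, hence active or superior for $a$ in $\mathcal{I}_{splc}$, hence has $\mathcal{I}_{linear}$-bang-per-buck exactly $1$; all other valuations were zeroed out. Thus $\xx$ spends each agent's full budget on maximum-bang-per-buck items. It remains to check the supply/spending-restriction condition: the total base spending on each item copy is unchanged between the two instances (superior copies are spent on by a unique agent for the same amount $p_i+q_{aij}$, whose base part is $p_i$; active copies carry base spending $p_i x_{aij}$ summing to the same flow value), so the total base spending on the $k_i$ copies of type $i$ is still exactly $k_i\min(1,p_i)$. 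Hence $(\xx,\pp)$ is a spending-restricted equilibrium for $\mathcal{I}_{linear}$.

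For the second assertion, I would invoke Lemma~\ref{lem:upperbound}, which expresses the upper bound purely in terms of the price vector $\pp$ and the set $H(\pp) = \{i : p_i > 1\}$ of high-price types, namely $\big(\prod_{i\in H(\pp)} p_i^{k_i}\big)^{1/n}$. Since $\xx$ and $\pp$ are a spending-restricted solution for \emph{both} instances (by the first part) and the price vector $\pp$ is literally the same, $H(\pp)$ and all the $k_i$ are identical, so Lemma~\ref{lem:upperbound} produces the same numerical bound for $\mathcal{I}_{splc}$ and for $\mathcal{I}_{linear}$. (One should also note that the valuations of $\mathcal{I}_{linear}$ are automatically normalized for $\pp$ — every agent's best bang-per-buck is $1$ — which is the hypothesis Lemma~\ref{lem:upperbound} requires.)

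The main obstacle I anticipate is the first part: one must be careful that in passing to $\mathcal{I}_{linear}$ we have not accidentally \emph{created} a better option for some agent, i.e. that no item copy that $a$ does not hold offers $a$ bang-per-buck exceeding $1$. This is handled by the construction — all such valuations are set to $0$ — but it relies on the fact that each physical item copy is "owned" by the agents fractionally holding it, and the superior copies by a single agent; I would make this bookkeeping explicit, taking care that when several agents share base-price (active) copies of the same type, the linear instance assigns each of them the same valuation $p_i$ for that type's copies, so the flow structure realizing the active spending in $\mathcal{I}_{splc}$ transfers verbatim to $\mathcal{I}_{linear}$. Once this correspondence of item copies and spendings is pinned down, both claims follow directly.
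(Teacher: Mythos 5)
Your proof follows essentially the same route as the paper's: reprice each item copy of $\mathcal{I}_{linear}$ at its valuation (equivalently, at the amount spent on it), check that every agent then has uniform bang-per-buck $1$ on exactly the copies he holds and that the spending restriction holds copy by copy, and conclude that the bound of Lemma~\ref{lem:upperbound} is unchanged. The one imprecision is in your second paragraph, where you say the price vector is ``literally the same'': the linear instance's prices are the per-copy prices $p_i+q_{aij}$ (superior) and $p_i$ (active) that you yourself use in the first paragraph, so what must actually be checked --- and what the paper checks --- is that the set of high-price items coincides under the two price vectors, which holds because a superior copy costs its unique owner at most his unit budget (hence is low-price in both instances) while every high-price copy is active and keeps price $p_i$.
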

\begin{proof} First we give a spending-restricted equilibrium price $\bf{p'}$ for $\mathcal{I}_{linear}$ under $\xx$. 
Notice that by the way we construct  $\mathcal{I}_{linear}$, each item has the same valuation for all agents that it is assigned to in $\xx$.
We price the item at that valuation. 
It can easily be seen that under such $\bf{p'}$, each agent has bang-per-buck value of 1 and spends all his budget on maximum bang-per-buck items.
 
Since $\xx$ and $\pp$ is a spending-restricted solution for $\mathcal{I}_{splc}$, the base spending on each item of type $i$ is exactly $\min(1,p_i)$. 
Base spending only differs from total spending at items where the agents have to pay extra utility money. 
On each of those items, the total spending, and hence the price under $\bf{p'}$, must be less than 1 since the item is assigned completely to only one agent.  
It follows that total spending on each item $i$ is equal to $\min(1,p'_i)$ under $\xx$ and $\bf{p'}$. 

Finally, the sets of high-price items with respect to $\bf{p}$ and $\bf{p'}$ are identical. 
Therefore, the product in the RHS of the inequality in \autoref{lem:upperbound} has the same value in both cases. In other words, it can serve as 
an upper bound for both $\mathcal{I}_{{splc}}$ and $\mathcal{I}_{{linear}}$.
\end{proof}

%By carefully analyzing the rounding algorithm, \cite{C+} show that the approximation factor of the rounding algorithm in the case of linear utility NSW is 2. 
%Since Step 5 computes the optimal matching of the remaining items, it suffices to show that there is an assignment that obtains the desired approximation guarantee. 
%
%To show that, \cite{C+} first introduce the \emph{pruned} spending graph $P$ by removing some edges from $G$. 
%Specifically, for each item $j$ that has more than one child-agent in $G$, the edges connecting it to all child-agents are removed, except the one between $j$ and the child-agent that spends the most money on $j$. 
%
%Note that $P$ is also a forest of trees. For each tree $T$ in $P$, let $M_T$ be the union of items
%in $T$ with the items that were assigned to agents in $T$ in Steps 3 and 4. Also, let $H_T(\pp)$ be the set of items $j$ such that $t(j) \in H(\pp)$ where $t(j)$ is the type of $j$. \cite{C+} proved the following lemma:
%
%\begin{lemma} {\cite{C+}}
%\label{lem:trees}
%For any tree $T$ with $n_T$ agents, the allocation $\zz$ returned by the rounding algorithm satisfies
%\[
%	\prod_{a \in T} v_a(\zz) \geq \frac{1}{2^{n_T}} \prod_{j \in H_T(\pp)} p_{t(j)}.
%\]
%\end{lemma} 
	
%From lemma~\ref{lem:upperbound} and lemma~\ref{lem:trees}, 
%\[  \prod_{a} v_a(\zz)  = \prod_{T} \prod_{a \in T} v_a(\zz) \geq  \frac{1}{2^n} \prod_{i \in H(\pp)} p_i^{k_i} \geq \frac{1}{2^n} \prod_a u_a(\xx^*).  \]

By \cite{C+}, applying the rounding procedure in \autoref{alg:marketRounding} to $\xx$ gives an integral solution that is at least factor 2 of the upper bound of $\mathcal{I}_{{linear}}$ in \autoref{lem:upperbound}.
From \autoref{lem:reduction}, that upper bound is also an upper bound of $\mathcal{I}_{{splc}}$.
We can state the main theorem.

\begin{theorem}
$\textsc{SpendingRestrictedRounding}$ is a factor 2 approximation algorithm for NSW under SPLC utilities.
\end{theorem}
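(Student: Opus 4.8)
The plan is to stitch together four things that are already in place: the polynomial‑time equilibrium computation ($\textsc{ScalingAlgorithm}$), the reduction of \autoref{lem:reduction} to a linear instance, the rounding guarantee of \cite{C+} for linear spending‑restricted solutions, and one short monotonicity observation comparing the SPLC objective with the linear one on the same integral allocation. Concretely, the claim follows once we show that the allocation produced by $\textsc{SpendingRestrictedRounding}$ is computed in polynomial time and that its Nash welfare under the SPLC utilities is at least $\tfrac12$ of $\left(\prod_{i\in H(\pp)}p_i^{k_i}\right)^{1/n}$, since by \autoref{lem:upperbound} the latter quantity dominates $\left(\prod_a u_a(\xx^*)\right)^{1/n}$ for the optimal integral $\xx^*$.

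First I would invoke the running‑time theorem of Section~\ref{alg.poly} to get, in polynomial time, a spending‑restricted equilibrium price $\pp$ and fractional allocation $\xx$ for $\mathcal I_{splc}$, and normalize the valuations for $\pp$ so that every agent's bang‑per‑buck equals $1$; this rescaling does not change the optimization problem. With $b_a=1$, the construction of $\mathcal I_{linear}$ assigns to each item $(a,i,j)$ with $x_{aij}>0$ the value $p_i+q_{aij}=u_{aij}$ (superior case) or $p_i=u_{aij}$ (active case), and $0$ to every other agent–item pair. By \autoref{lem:reduction}, $\xx$ is a spending‑restricted solution for $\mathcal I_{linear}$, and \autoref{lem:upperbound} applied to $\mathcal I_{linear}$ yields the \emph{same} upper bound $\prod_{i\in H(\pp)}p_i^{k_i}$, because $\pp$ and the linear prices induce the same set of high‑price types.

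Next I would run the rounding of \autoref{alg:marketRounding} on the spending graph $G$; by \cite{C+} this produces, in polynomial time, an integral allocation $\zz$ with $\bigl(\prod_a u_a^{linear}(\zz)\bigr)^{1/n}\ge \tfrac12\bigl(\prod_{i\in H(\pp)}p_i^{k_i}\bigr)^{1/n}$. The same $\zz$ is a feasible integral allocation of the original items, so it remains to relate the two objectives. In $G$ an agent $a$ is adjacent only to items it holds a positive fractional share of, so if $\zz$ gives $a$ exactly $\ell_i$ copies of type $i$, then $u_a^{linear}(\zz)$ from type $i$ is a sum of $\ell_i$ of the marginals $u_{ai\cdot}$, whereas $u_a^{splc}(\zz)$ from type $i$ is $\sum_{j=1}^{\ell_i}u_{aij}$, the sum of the $\ell_i$ \emph{largest} such marginals; since $u_{ai1}\ge u_{ai2}\ge\cdots$, we get $u_a^{splc}(\zz)\ge u_a^{linear}(\zz)$ for every $a$. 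Multiplying over $a$ and taking $n$‑th roots, $\bigl(\prod_a u_a^{splc}(\zz)\bigr)^{1/n}\ge \tfrac12\bigl(\prod_{i\in H(\pp)}p_i^{k_i}\bigr)^{1/n}\ge \tfrac12\bigl(\prod_a u_a(\xx^*)\bigr)^{1/n}$, which is the factor‑$2$ guarantee.

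The only genuinely delicate point is the final monotonicity step: one must make sure the "partition into units'' used in building $G$ identifies copies consistently between $\mathcal I_{splc}$ and $\mathcal I_{linear}$, so that the $\ell_i$ copies $a$ receives in $\zz$ really correspond to $\ell_i$ of $a$'s marginal‑utility slots (with no copy assigned to an agent of linear value $0$, which the construction of $G$ guarantees). Once this bookkeeping is pinned down, the inequality $u_a^{splc}(\zz)\ge u_a^{linear}(\zz)$ is immediate from the ordering of the $u_{aij}$, and polynomial running time follows from the theorem of Section~\ref{alg.poly} together with the fact that constructing $G$, $\mathcal I_{linear}$, and the \cite{C+} rounding are all polynomial. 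Everything else is a direct application of \autoref{lem:upperbound}, \autoref{lem:reduction}, and \cite{C+}.
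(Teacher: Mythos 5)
Your proposal is correct and follows essentially the same route as the paper: compute the spending-restricted equilibrium, pass to $\mathcal{I}_{linear}$ via \autoref{lem:reduction} so that the upper bound of \autoref{lem:upperbound} transfers, apply the \cite{C+} rounding, and note that the SPLC value of the rounded allocation dominates its linear value. The only difference is that you make explicit the final monotonicity step (the SPLC utility of $\ell_i$ copies is the sum of the $\ell_i$ \emph{largest} marginals $u_{aij}$, hence at least the linear value), which the paper leaves implicit in the remark that the linear solution ``can then be viewed as a solution under SPLC utilities.''
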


\def\cE{{\cal E}}
\section{Real Stable Polynomial Approach}
\label{sec.stability}

In this section we prove \autoref{thm:mainstable}.
First, in Section~\ref{sec:stableprelim} we give a short overview of stable polynomials and we discuss the main tool that we use in our proof. Then, in Section~\ref{sec:stableproof} we prove the theorem. 
 %algorithm for Nash welfare maximization when buyers have SPLC utilities using the machinery of stable polynomials. 

%Throughout this section we adopt the following notation. 
%Let $K:=\max_i k_i$. Without loss of generality we assume we have $K$ copies of each item. To justify this we assume that the utility of all agents from additional copies of each item is $0$.

For a vector $\by$, we write $\by > 0$ to denote that all coordinates of $\by$ are more than $0$.
For two vectors $\bx,\by\in \R^n$ we define $\bx\by=(x_1y_1,\dots,x_ny_n)$. Similarly, we define $\bx/\by=(x_1/y_1,\dots,x_n/y_n)$.
For a vector $\bx\in\R^n$,  we define $\exp(\bx):=(e^{x_1},\dots,e^{x_n})$. For two vectors $\bx, \by\in\R^n$ we define $\bx^\by$ as $\prod_{i=1}^n x_i^{y_i}$.
For a real number $c\in\R$ we write $c^\bx$ to denote $\prod_{i=1}^n c^{x_i}$.

%For vectors $\bx,\by\in \R^n$ we write 
%$$\bx^\by:=\prod_{i=1}^d x_i^{y_i}.$$

\subsection{Preliminaries}
\label{sec:stableprelim}
Stable polynomials are natural multivariate generalizations
of real-rooted univariate polynomials. For a complex number $z$, let
$\image(z)$ denote the imaginary part of $z$.
We say a polynomial $p(z_1,\dots,z_n)\in\C[z_1,\dots,z_n]$ is {\em stable}
if whenever $\image(z_i)>0$ for all $1\leq i\leq m$, $p(z_1,\dots,z_n)\neq 0$. As the only exception, we also call the zero polynomial stable. We say $p(.)$ is real stable, if it is stable and all of its coefficients are real. It is easy to see that any univariate polynomial is real stable  if and only if it is real rooted. 

For a polynomial $p$, let $\deg p$ be the maximum degree of all monomials of $p$.
We say a polynomial $p\in\R[z_1,\dots,z_n]$ is degree $k$-homogeneous, or $k$-homogeneous, if every monomial of $p$ has total degree exactly $k$. Equivalently, $p$ is $k$-homogeneous if for all $a\in\R$, we have
$$ p(a\cdot z_1,\dots,a\cdot z_n)=a^k p(z_1,\dots,z_n).$$
%We say a monomial $z_1^{\alpha_1}\dots z_n^{\alpha_n}$ is {\em square-free} if $\alpha_1,\dots,\alpha_n\in\{0,1\}$. 
%For a set $S\subset 2^{[n]}$ we write
%$$ \bz^S=\bz^{\bone_S}=\prod_{i\in S} z_i.$$
%Thus, we can represent a square-free monomial with the set of indices of variables of that monomial.
%For a multilinear polynomial $p\in \R[z_1,\dots,z_n]$ and a set $S\subseteq [n]$, we write $C_p(S)$ to denote the coefficient of the monomial $\bz^S$.

For a polynomial $p\in \R[z_1,\dots,z_n]$ and a vector $\bkappa\in \Z^n$, let $C_p(\bkappa)$ be the coefficient of the monomial $\prod_{i=1}^n z_i^{\kappa_i}$ in $p$. 

The following facts about real stable polynomials are well-known
\begin{fact}\label{fact:stablemult}
If $p,q\in\R[z_1,\dots,z_n]$ are real stable, then $p\cdot q$ is also real stable.
\end{fact}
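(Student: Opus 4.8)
The statement to prove is Fact~\ref{fact:stablemult}: the product of two real stable polynomials is real stable.

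\medskip

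\noindent\textbf{Proof plan.} The cleanest approach is to reduce the stability of $p \cdot q$ directly to the stability of each factor, using nothing more than the definition of stability and the fact that a product of complex numbers is zero if and only if one of the factors is zero. Concretely, suppose $z_1,\dots,z_n \in \mathbb{C}$ satisfy $\operatorname{Im}(z_i) > 0$ for all $i$. Since $p$ is stable, $p(z_1,\dots,z_n) \neq 0$; since $q$ is stable, $q(z_1,\dots,z_n) \neq 0$. Because $\mathbb{C}$ is an integral domain (it has no zero divisors), the product $(p \cdot q)(z_1,\dots,z_n) = p(z_1,\dots,z_n)\cdot q(z_1,\dots,z_n)$ is a product of two nonzero complex numbers and hence nonzero. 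This shows $p \cdot q$ is stable. I would handle the degenerate case separately: if either $p$ or $q$ is the zero polynomial, then $p \cdot q$ is the zero polynomial, which is stable by the convention stated in the excerpt. Finally, the coefficients of $p \cdot q$ are polynomial (integer-coefficient) expressions in the coefficients of $p$ and $q$, so if the latter are all real then so are the former; hence $p \cdot q$ is not merely stable but real stable. This completes the argument.

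\medskip

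\noindent\textbf{Remarks on difficulty.} There is essentially no obstacle here — the statement is a one-line consequence of the definition. The only points requiring a modicum of care are (i) remembering to treat the zero-polynomial edge case, since the definition of stability in the excerpt explicitly carves it out as an exception, and (ii) noting the trivial observation that the real-coefficient property is preserved under multiplication, which upgrades "stable" to "real stable." No machinery beyond the definition and the absence of zero divisors in $\mathbb{C}$ is needed, and the argument extends verbatim to any finite product of real stable polynomials by induction, which is presumably how the fact will be used later in the paper (e.g.\ for the polynomial $\prod_a \sum_{i,j} x_{aij} u_{aij} y_i$ appearing in the real-stability approach).
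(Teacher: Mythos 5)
Your proof is correct; the paper states this as a well-known fact without proof, and your argument (evaluate at a point in the open upper half-polydisc, use that $\mathbb{C}$ has no zero divisors, handle the zero polynomial by convention, and note that real coefficients are preserved under multiplication) is exactly the standard one-line justification.
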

\begin{fact}\label{fact:stablelinear}
For any nonnegative numbers $a_1,\dots,a_n$, the polynomial $a_1z_1+\dots+a_nz_n$ is real stable. 
\end{fact}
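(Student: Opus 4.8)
The plan is to verify the definition of real stability directly, since the polynomial in question is linear. Write $p(z_1,\dots,z_n)=a_1z_1+\dots+a_nz_n$ with all $a_i\geq 0$. The coefficients of $p$ are exactly the $a_i$, which are real, so to conclude that $p$ is real stable it suffices to check that $p$ is stable, i.e., that $p(z_1,\dots,z_n)\neq 0$ whenever $\image(z_i)>0$ for all $i$.

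First I would dispose of the degenerate case: if $a_1=\dots=a_n=0$, then $p$ is the zero polynomial, which is declared stable by the stated exception, and there is nothing to prove. So assume from now on that $a_{i_0}>0$ for some index $i_0$.

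For the main case, the key observation is that taking imaginary parts commutes with a real linear combination: since each $a_i$ is real, $\image\big(p(z_1,\dots,z_n)\big)=\sum_{i=1}^n a_i\,\image(z_i)$. Now suppose $\image(z_i)>0$ for every $i$. Then every summand $a_i\,\image(z_i)$ is nonnegative (as $a_i\geq 0$), and the summand with $i=i_0$ is strictly positive (as $a_{i_0}>0$ and $\image(z_{i_0})>0$). Hence $\image\big(p(z_1,\dots,z_n)\big)>0$, and in particular $p(z_1,\dots,z_n)\neq 0$. This proves stability, and together with the realness of the coefficients it proves real stability.

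I do not anticipate any genuine obstacle here; the statement is elementary. The only point that requires a moment's care is the all-zero case, which is handled by the convention (made in the preliminaries) that the zero polynomial counts as stable. One could also remark in passing that this fact is the base case for building up more complicated real stable polynomials via Fact~\ref{fact:stablemult}.
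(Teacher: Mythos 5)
Your proof is correct. The paper states this fact without proof, as a well-known observation; your argument---taking imaginary parts of the real linear combination and noting the sum is strictly positive on the upper half-plane unless all coefficients vanish, with the zero polynomial handled by convention---is exactly the standard verification one would supply.
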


\noindent The following theorem is proved by Gurvits and was the key to the recent application of stable polynomials to the Nash welfare maximization problem \cite{AOSS17}.
\begin{theorem}[\cite{Gur06}] 
\label{thm:gurvits}
For any  $n$-homogeneous stable polynomial $p\in\R[z_1,\dots,z_n]$ with nonnegative coefficients, 
$$ C_p(1,1,\dots,1) \geq \frac{n!}{n^n} \inf_{\bz>0} \frac{p(z_1,\dots,z_n)}{z_1\dots z_n}.$$	
\end{theorem}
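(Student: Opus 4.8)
The plan is to prove this by induction on the number of variables $n$, using the standard ``capacity'' argument. Throughout, write $\mathrm{cap}(p) := \inf_{\bz > 0} p(z_1,\dots,z_n)/(z_1\cdots z_n)$, so that the target inequality reads $C_p(1,\dots,1) \ge \frac{n!}{n^n}\,\mathrm{cap}(p)$; abbreviate the left side by $C_p(\bone)$. The base case $n=1$ is immediate: a $1$-homogeneous polynomial in one variable with nonnegative coefficients is $p = c\,z_1$ with $c\ge 0$, for which $C_p(1) = c = \mathrm{cap}(p)$ and $1!/1^1 = 1$. For the inductive step I would reduce from $n$ to $n-1$ variables via $p \mapsto q$, where $q(z_1,\dots,z_{n-1})$ is the coefficient of $z_n^{1}$ in $p$ viewed as a polynomial in $z_n$; equivalently $q = (\partial p/\partial z_n)|_{z_n = 0}$. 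Three facts make this work, all routine: $q$ is again real stable, since differentiation and specialization of a real variable to a real value both preserve real stability; $q$ is $(n-1)$-homogeneous with nonnegative coefficients; and $C_q(\bone) = C_p(\bone)$, because the coefficient of $z_1\cdots z_{n-1}$ in the $z_n^1$-part of $p$ is exactly $C_p(\bone)$.

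Granting the induction hypothesis for $q$, it then suffices to show that the reduction loses at most the right factor in capacity:
$$ \mathrm{cap}(q) \;\ge\; \Big(\tfrac{n-1}{n}\Big)^{n-1}\,\mathrm{cap}(p), $$
since then $C_p(\bone) = C_q(\bone) \ge \frac{(n-1)!}{(n-1)^{n-1}}\,\mathrm{cap}(q) \ge \frac{(n-1)!}{(n-1)^{n-1}}\big(\tfrac{n-1}{n}\big)^{n-1}\mathrm{cap}(p) = \frac{n!}{n^n}\,\mathrm{cap}(p)$. To establish the capacity bound, fix $z_1,\dots,z_{n-1}>0$ and set $g(t) := p(z_1,\dots,z_{n-1},t)$, a univariate polynomial with nonnegative coefficients; it is real-rooted, because specializing the real stable $p$ at real positive values of $z_1,\dots,z_{n-1}$ yields a real stable, hence real-rooted, univariate polynomial. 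Its degree $d$ in $t$ is at most $n$, we have $q(z_1,\dots,z_{n-1}) = g'(0)$, and $\inf_{t>0} g(t)/t \ge (z_1\cdots z_{n-1})\,\mathrm{cap}(p)$ straight from the definition of $\mathrm{cap}$. So everything reduces to the univariate inequality: for real-rooted $g(t) = \sum_{j=0}^{d} a_j t^j$ with all $a_j \ge 0$ and $d \ge 1$, one has $g'(0) \ge (\tfrac{d-1}{d})^{d-1}\inf_{t>0} g(t)/t$. Since $(\tfrac{d-1}{d})^{d-1}$ is nonincreasing in $d$, this is at least $(\tfrac{n-1}{n})^{n-1}$ when $d\le n$; taking the infimum over $z_1,\dots,z_{n-1}$ then gives the capacity bound (the degenerate cases where $z_n$ is absent from $p$ are handled trivially, as both sides vanish there).

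For the univariate inequality I distinguish two cases. If $g(0)=0$, then $g(t)/t$ is itself a polynomial with nonnegative coefficients, so its infimum over $t>0$ equals its value at $t=0$, namely $a_1 = g'(0)$, and the inequality holds with room to spare. Otherwise all roots of $g$ are strictly negative; $g(t)/t \to +\infty$ as $t\to 0^+$ and (for $d\ge 2$, the case $d=1$ being immediate) as $t\to\infty$, so the infimum is attained at some $t^\ast>0$ with $g'(t^\ast) = g(t^\ast)/t^\ast =: c$. Writing $g(t) = a_d\prod_{i=1}^d (t+\lambda_i)$ with all $\lambda_i>0$, the stationarity relation becomes $\sum_i \frac{1}{t^\ast + \lambda_i} = \frac{1}{t^\ast}$; substituting $\mu_i := \frac{t^\ast}{t^\ast+\lambda_i} \in (0,1)$, so that $\sum_i \mu_i = 1$, and simplifying gives
$$ \frac{g'(0)}{c} \;=\; \frac{\sum_i \prod_{j\ne i}\lambda_j}{\sum_i \prod_{j\ne i}(t^\ast+\lambda_j)} \;=\; \Big(\sum_{i=1}^d \frac{\mu_i}{1-\mu_i}\Big)\prod_{k=1}^d (1-\mu_k). $$
It remains to show that the right-hand side is at least $(\tfrac{d-1}{d})^{d-1}$ over the open simplex $\{\mu>0 : \sum_i\mu_i = 1\}$, with equality at the barycenter $\mu_i = 1/d$ (where it evaluates to $\tfrac{d}{d-1}\cdot(\tfrac{d-1}{d})^d = (\tfrac{d-1}{d})^{d-1}$).

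This last optimization is the step I expect to be the real work. The natural route is Lagrange multipliers: a smooth, symmetric objective on the simplex under a symmetric constraint has the barycenter as a stationary point, and one must check it is the minimizer. I would verify the second-order condition there, and rule out the boundary by noting that the objective restricted to a face where $k$ of the $\mu_i$ vanish is exactly the $k$-variable analogue, hence at least $(\tfrac{k-1}{k})^{k-1} \ge (\tfrac{d-1}{d})^{d-1}$ by the monotonicity already used, together with a limiting argument as $\mu$ approaches the boundary. Everything else---the homogeneity bookkeeping, the stability-preservation facts, the degenerate cases---is routine, and assembling the pieces yields $C_p(\bone) \ge \frac{n!}{n^n}\mathrm{cap}(p)$.
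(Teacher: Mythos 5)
First, note that the paper does not prove this theorem at all---it is imported verbatim from Gurvits \cite{Gur06}---so there is no in-paper proof to compare against; your proposal has to be judged against the known proof. Your overall architecture is exactly the standard one: induct on the number of variables via $q=(\partial p/\partial z_n)|_{z_n=0}$, use the three preservation facts (real stability under differentiation and real specialization, homogeneity, $C_q(\bone)=C_p(\bone)$), and reduce everything to the univariate statement that a real-rooted $g$ of degree $d$ with nonnegative coefficients satisfies $g'(0)\ge\bigl(\tfrac{d-1}{d}\bigr)^{d-1}\inf_{t>0}g(t)/t$. All of that bookkeeping, including the monotonicity of $\bigl(\tfrac{d-1}{d}\bigr)^{d-1}$ in $d$ and the degenerate cases, is correct.

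The gap is in the last step, which you yourself flag as ``the real work.'' Your plan is to locate the minimizer of $g(t)/t$ at a stationary point $t^\ast$, pass to the simplex variables $\mu_i=t^\ast/(t^\ast+\lambda_i)$, and then minimize $F(\mu)=\bigl(\sum_i\tfrac{\mu_i}{1-\mu_i}\bigr)\prod_k(1-\mu_k)$ over $\sum_i\mu_i=1$ by Lagrange multipliers, checking the second-order condition at the barycenter and handling the boundary. The algebra up to the formula for $g'(0)/c$ is right, and the inequality $F(\mu)\ge\bigl(\tfrac{d-1}{d}\bigr)^{d-1}$ is in fact true, but the proposed verification does not close it: setting $v_i=1/(1-\mu_i)$, the stationarity equations read $v_i^2-Sv_i=c$ for all $i$ (with $S=\sum_j\mu_j/(1-\mu_j)$), so each $v_i$ is one of the \emph{two} roots of a quadratic, and the interior critical points include non-symmetric configurations where the $\mu_i$ take two distinct values. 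A local second-order check at the barycenter therefore cannot certify a global minimum, and you would still have to dispose of an entire family of critical points. The standard proof avoids this optimization entirely: writing $g(t)=g(0)\prod_i(1+t/\lambda_i)$ and applying AM--GM gives $g(t)\le g(0)\bigl(1+\tfrac{t}{d}\cdot\tfrac{g'(0)}{g(0)}\bigr)^d$, and plugging the single explicit value $t_0=\tfrac{d}{d-1}\cdot\tfrac{g(0)}{g'(0)}$ into $g(t)/t$ yields $\inf_{t>0}g(t)/t\le\bigl(\tfrac{d}{d-1}\bigr)^{d-1}g'(0)$ directly---an upper bound on the infimum needs only one test point, not a characterization of the minimizer. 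Substituting that lemma for your simplex optimization makes the rest of your argument complete.
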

We use the following generalization of the above theorem which was recently proved in \cite{AO16}.

\begin{theorem}[\cite{AO16}]\label{thm:pqstability}
For any real stable  polynomials $p,q\in \R[z_1,\dots,z_n]$ with nonnegative coefficients, %if one of $p,q$ is multilinear, then 
\begin{eqnarray} e^{-\min\{\deg p,\deg q\}} \cdot \adjustlimits\sup_{\balpha} \inf_{\by,\bz>0} \frac{p(\by)q(\balpha\bz)}{(\by\bz)^\balpha}&\leq& \sum_{\bkappa\in\Z_+^n} \bkappa! C_p(\bkappa)C_q(\bkappa). \label{eq:pqlowerbound}\\
\adjustlimits\sup_{\balpha\geq 0}\inf_{\by,\bz>0} \frac{p(\by)q(\balpha\bz)}{(\by\bz)^\balpha} &\geq &
\sum_{\bkappa\in\Z_+^n} \bkappa^\bkappa C_p(\bkappa)C_q(\bkappa).\label{eq:pqupperbound}
\end{eqnarray}
where $\bkappa!:=\prod_{i=1}^n \kappa_i!$.
\end{theorem}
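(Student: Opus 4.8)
The plan is to rewrite both sides in terms of polynomial \emph{capacity} and then prove the two inequalities by quite different means: \eqref{eq:pqupperbound} by an elementary AM--GM/convex--duality computation, and \eqref{eq:pqlowerbound} by an induction on the number of variables that generalizes Gurvits's capacity bound (\autoref{thm:gurvits}). Write $\mathrm{cap}_{\balpha}(r):=\inf_{\by>0}r(\by)/\by^{\balpha}$ for a polynomial $r$ with nonnegative coefficients, and $B(p,q):=\sum_{\bkappa}\bkappa!\,C_p(\bkappa)C_q(\bkappa)$. Substituting $\bz\mapsto\bz/\balpha$ in the $q$-factor gives the identity $\inf_{\by,\bz>0}\frac{p(\by)q(\balpha\bz)}{(\by\bz)^{\balpha}}=\balpha^{\balpha}\,\mathrm{cap}_{\balpha}(p)\,\mathrm{cap}_{\balpha}(q)$, so both claims amount to comparing $\sup_{\balpha}\balpha^{\balpha}\,\mathrm{cap}_{\balpha}(p)\,\mathrm{cap}_{\balpha}(q)$ with $B(p,q)$, respectively with $\sum_{\bkappa}\bkappa^{\bkappa}C_p(\bkappa)C_q(\bkappa)$. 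Two features of $B$ drive the induction: $B(p,q)=q(\partial_{\by})\,p(\by)\big|_{\by=0}=p(\partial_{\by})\,q(\by)\big|_{\by=0}$ is symmetric, and, fixing one variable $z_n$ and writing $p=\sum_j p_j(\bz')z_n^{\,j}$, $q=\sum_j q_j(\bz')z_n^{\,j}$ with $\bz'=(z_1,\dots,z_{n-1})$, one has $B(p,q)=\sum_j j!\,B'(p_j,q_j)$ with $B'$ the same form in $\bz'$.

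For \eqref{eq:pqupperbound} (the easy direction): with $\by=\exp(\bt)$ the map $\bt\mapsto\log r(\exp\bt)$ is convex, so $\log\mathrm{cap}_{\balpha}(r)$ is minus its convex conjugate at $\balpha$; bounding that conjugate by Jensen (weighted AM--GM over the monomials of $r$) shows that for any probability vector $\bm{\lambda}$ supported on the common monomials of $p$ and $q$, taking $\balpha:=\sum_{\bkappa}\lambda_{\bkappa}\bkappa$ gives $\mathrm{cap}_{\balpha}(p)\,\mathrm{cap}_{\balpha}(q)\ge\prod_{\bkappa}\bigl(C_p(\bkappa)C_q(\bkappa)/\lambda_{\bkappa}^{2}\bigr)^{\lambda_{\bkappa}}$. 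Choosing $\bm{\lambda}$ proportional to $\bkappa^{\bkappa}C_p(\bkappa)C_q(\bkappa)$ and discharging the remaining $\balpha^{\balpha}$ and entropy terms by the convexity of $x\log x$ and $H(\bm{\lambda})\ge 0$ yields $\balpha^{\balpha}\mathrm{cap}_{\balpha}(p)\mathrm{cap}_{\balpha}(q)\ge\sum_{\bkappa}\bkappa^{\bkappa}C_p(\bkappa)C_q(\bkappa)$, i.e.\ \eqref{eq:pqupperbound}. No loss is incurred beyond routine convexity bookkeeping.

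For \eqref{eq:pqlowerbound} I would induct on $n$; the base case $n=0$ is an identity, and (using symmetry of $B$) I may assume $\deg q\le\deg p$, so $\min\{\deg p,\deg q\}=\deg q$. Peel $z_n$. Since $\partial_{z_i}$ and real substitutions preserve real stability and limits of stable polynomials are stable or zero, each $p_j,q_j$ is real stable, and since $q_jz_n^{\,j}$ is part of $q$ we have $\deg q_j\le\deg q-j$. Apply the inductive hypothesis to each $B'(p_j,q_j)$ and sum: because $e^{-\min\{\deg p_j,\deg q_j\}}\ge e^{-(\deg q-j)}=e^{\,j}e^{-\deg q}$, the degree-drop exactly offsets a geometric $e^{\,j}$ weight, so the target loss stays $e^{-\deg q}$ rather than $e^{-n}$, and the claim reduces to a single one-variable inequality: fixing $\balpha=(\balpha',\alpha_n)$ and using the same $\balpha'$ throughout, one must show $\sum_j j!\,e^{\,j}\,\mathrm{cap}_{\balpha'}(p_j)\,\mathrm{cap}_{\balpha'}(q_j)\ \ge\ \alpha_n^{\alpha_n}\,\mathrm{cap}_{\balpha}(p)\,\mathrm{cap}_{\balpha}(q)$. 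The engine here is a fractional-exponent version of Gurvits's single-variable lemma: a real-rooted univariate $\phi(t)=\sum_i c_it^i$ with nonnegative coefficients and degree $d$ satisfies $c_1\ge(\tfrac{d-1}{d})^{d-1}\inf_{t>0}\phi(t)/t\ge e^{-1}\inf_{t>0}\phi(t)/t$, together with the analogous estimate that extracts the weight at a non-integer exponent $\alpha_n$ by interpolating between consecutive coefficients. (Specializing $q=z_1\cdots z_n$ collapses $B(p,q)$ to $C_p(\mathbf{1})$ and recovers \autoref{thm:gurvits} up to the constant $e^{-n}$ versus $n!/n^n$, so this is genuinely the right generalization to attempt.)

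The step I expect to be the main obstacle is exactly that one-variable reduction. The issue is that $\mathrm{cap}_{\balpha}(p)$ is \emph{not} a function of the numbers $\mathrm{cap}_{\balpha'}(p_j)$ alone: in general $\mathrm{cap}_{\balpha'}\!\bigl(\sum_j p_jt^{\,j}\bigr)$ is strictly larger than $\sum_j t^{\,j}\mathrm{cap}_{\balpha'}(p_j)$, so the naive move of freezing $\bz'$ to the separate worst points of the individual $p_j$ loses too much and never reaches the factor $e^{-\min\{\deg p,\deg q\}}$. The fix must use that $\{p_j\}_j$ (and similarly $\{q_j\}_j$) form an interlacing/compatible family, which holds precisely because $p$ is real stable in $z_n$: this should let one freeze $\bz'$ to a \emph{single} near-optimal point and honestly collapse to a genuine univariate problem in $z_n$ before invoking the real-rootedness estimate, and it is what ties the loss to the degree rather than to the ambient dimension $n$. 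Designing the inductive invariant so that it is simultaneously preserved under peeling and sharp enough to telescope to $\deg q$ is the technical heart of the argument --- the content of the lemma of \cite{AO16}.
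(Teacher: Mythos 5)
First, a point of comparison: the paper does not prove \autoref{thm:pqstability} at all --- it is imported verbatim from \cite{AO16} and used as a black box. So there is no in-paper argument to measure your proposal against; what you have written is an attempt at reconstructing the main theorem of \cite{AO16} itself, and it has to be judged on its own terms. On those terms it is a strategy outline with two genuine gaps, one of which you acknowledge and one of which you do not.

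The unacknowledged gap is in your ``easy direction'' \eqref{eq:pqupperbound}. Your AM--GM step is fine: for any probability vector $\bm{\lambda}$ on exponents and $\balpha=\sum_{\bkappa}\lambda_{\bkappa}\bkappa$ one does get $\mathrm{cap}_{\balpha}(p)\mathrm{cap}_{\balpha}(q)\ge\prod_{\bkappa}(C_p(\bkappa)C_q(\bkappa)/\lambda_{\bkappa}^2)^{\lambda_{\bkappa}}$. But the ``discharging'' step fails. With $\lambda_{\bkappa}\propto\bkappa^{\bkappa}C_p(\bkappa)C_q(\bkappa)$ the target reduces to $\balpha^{\balpha}\,e^{H(\bm{\lambda})}\ge\prod_{\bkappa}\bkappa^{\bkappa\lambda_{\bkappa}}$, and convexity of $x\log x$ gives the \emph{reverse} inequality $\prod_{\bkappa}\bkappa^{\bkappa\lambda_{\bkappa}}\ge\balpha^{\balpha}$ (Jensen: $\sum_{\bkappa}\lambda_{\bkappa}\kappa_i\log\kappa_i\ge\alpha_i\log\alpha_i$), so you need the entropy term to cover the deficit, and it does not in general: putting mass $1/2$ on exponents $0$ and $N$ in one variable leaves a shortfall of about $(N/2-1)\log 2$. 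Your sketch of this direction invokes no property of real stability whatsoever, and an inequality whose right-hand side is the full \emph{sum} $\sum_{\bkappa}\bkappa^{\bkappa}C_p(\bkappa)C_q(\bkappa)$ (rather than the max over $\bkappa$, which plain AM--GM does give) cannot be pure convex bookkeeping; the log-concavity of the coefficient sequences coming from real-rootedness has to enter. So this direction is not ``easy'' in the way you claim, and the argument as written does not close.

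The acknowledged gap is, of course, the lower bound \eqref{eq:pqlowerbound}. Your framing is sound --- the capacity identity, the symmetry and variable-peeling recursion $B(p,q)=\sum_j j!\,B'(p_j,q_j)$, the degree bookkeeping $\deg q_j\le\deg q-j$ that makes the loss telescope to $e^{-\min\{\deg p,\deg q\}}$ rather than $e^{-n}$, and the sanity check against Gurvits are all correct and do match the architecture of the argument in \cite{AO16}. But the single-variable step --- showing that coefficient extraction in one variable of a real stable polynomial loses only a bounded factor of capacity, which you correctly identify as requiring the interlacing structure of $\{p_j\}$ and which you defer as ``the content of the lemma of \cite{AO16}'' --- \emph{is} the theorem. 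Identifying where the difficulty lives is valuable, but as a proof this direction is not attempted.
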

Furthermore, it was shown in \cite{AO16} that one can optimize $\sup_{\balpha\geq 0} \inf_{\by,\bz>0} \frac{p(\by)q(\balpha\bz)}{(\by\bz)^\balpha}$ using classical convex programming tools. Equivalently, it is enough to optimize the following convex function 
$$\adjustlimits\sup_{\balpha\geq 0} \inf_{\by,\bz>0} \log \frac{p(\exp(\by))q(\balpha\exp(\bz))}{e^{\langle \balpha,\by\rangle}e^{\langle \balpha,\bz\rangle}},$$
where $\langle \balpha,\by\rangle:=\sum_{i=1}^n \alpha_iy_i$.

\def\cS{{\cal S}}
\subsection{Main Proof}\label{sec:stableproof}
%Consider a vector $\by$ of $m\cdot n\cdot K$ variables, $y_{aij}$, corresponding to an agent $a$, and the $j$-th copy of item $i$.
In this part we prove \autoref{thm:mainstable}.
Our main tool is \autoref{thm:pqstability}. To use that, first we need to construct two real stable polynomials $p,q$. Then, we use \autoref{thm:pqstability} to write a convex relaxation for the Nash welfare objective with SPLC utilities. Finally, we will describe our rounding algorithm and prove the correctness.

%Let $K:=\max_i k_i$.
Let $T$ be the set of all triplets $(a,i,j)$ where $a\in [n]$, $i\in [m]$ and $j\in [k_i]$.
Let $\bx\in \R_+^{T}$ be a vector; ideally we would like $\bx$ to be an allocation vector.
For a vector $\bx$, let $p_\bx\in \R[y_1,\dots,y_m]$ be the following real stable polynomial:
$$ p_\bx(y_1,\dots,y_m)=\prod_{a=1}^n \left(\sum_{i=1}^m \sum_{j=1}^{k_i} x_{aij} u_{aij} y_i \right).$$
Observe that if $\bx$ is an integral allocation vector then $p_\bx(1,1,\dots,1)$ is the Nash welfare corresponding to $\bx$.
The polynomial $p_\bx$ is real stable since stable polynomials are closed under multiplication, 
 \autoref{fact:stablemult}, and  any linear function with nonnegative coefficients is real stable,  \autoref{fact:stablelinear}. 
It is not hard to see that $p_\bx$ is $n$-homogeneous and has nonnegative coefficients (since $u_{aij}\geq 0$). 

Let us characterize all possible monomials of $p_\bx$. 
For a set $S\subseteq T$ let $\be_S$ be the vector where for all $i\in [m]$, $\coord{e}{S}{i}$ denotes the number of triplets of the form $(.,i,.)$ in $S$, i.e.,
$$ \coord{e}{S}{i}:=|\{(a,i,j)\in S: a\in[n], j\in [k_i]\}|.$$
Let us describe the monomials of $p_\bx$. For every set $S\in \binom{T}{n}$ define $C(S)$ in the following way
\[ C(S):=\begin{cases}
	\prod_{(a,i,j)\in S} u_{aij} & \text{if $S$ has one element of the form $(a,.,.)$ for every $a\in[n]$},\\
	0 & \text{otherwise}.
\end{cases} \]
Abusing notation slightly, for every set $S\in \binom{T}{n}$, define $C_{p_\bx}(S)$ as follows:

 %For any $S\subseteq T$ of size $n$, $p_\bx$ has a monomial $\by^{\be_S}$ with coefficient
% $p_\bx$ has a monomial $\by^{\be_S}$ with coefficient
\[ C_{p_\bx}(S) := C(S)\prod_{(a,i,j)\in S}x_{aij}. \]
Then the following holds:
\[ p_\bx(\by)=\sum_{S\in\binom{T}{n}}C_{p_\bx}(S)\by^{\be_S}. \]
We remark that different sets $S$ can produce the same $\be_S$. So the above expression is not necessarily the standard way of writing a polynomial as a sum of monomials, i.e. the above monomials can be merged.

Note that if $\bx$ is a $\{0,1\}$ vector, then for any $S$ where $\coord{e}{S}{i}>k_i$ for some $i$, $C_{p_\bx}(S)=0$; In other words, for an integral $\bx$, the degree of $y_i$ in $p_\bx$ is at most $k_i$.
But if $\bx$ is not integral, this is not necessarily true. 
Ideally, we would like to avoid these bad sets because they may unboundedly increase the value $p_\bx$ for fractional allocation vectors.
We specifically choose a real stable polynomial $q$ such that the maximum degree of $y_i$ in $q$ is at most $k_i$.
%This way, $q(\be_S)p_\bx(\be_S)=0$ for any such unwanted sets.

Let $K:=\sum_{i=1}^m k_i$. Define the following real stable polynomial 
$$ q(\by)=\frac{1}{(K-n)!} \partial_t^{K-n}  \prod_{i=1}^m (t+y_i/k_i)^{k_i}\Big|_{t=0}%=\sum_{S\in {\cS}} \by^S \prod_{i=1}^m {K\choose S(i)} 
$$
In words, $q$ is equal to the coefficient of monomial $t^{K-n}$ in the polynomial $\prod_{i=1}^m (t+y_i/k_i)^{k_i}.$  Observe that by definition, the degree of $y_i$ in $q$ is  at most $k_i$. Furthermore, $q(\by)$ is $n$-homogeneous.

Let $\cS\subseteq 2^{T}$ be a family of subsets of $T$, consisting of all subsets $S$ where $|S|=n$ and $\coord{e}{S}{i}\leq k_i$ for all $i\in [m]$.
The following lemma is immediate from the above discussion.
\begin{lemma}\label{lem:pdotqcharacterization}
$$ \sum_{\bkappa\in \Z_+^n} \bkappa^\bkappa C_{p_\bx}(\bkappa) C_q(\bkappa)=\sum_{S\in \cS} {\be_S}^{\be_S} C_{p_\bx}(S)C_q(\be_S).$$
\end{lemma}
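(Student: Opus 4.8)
The plan is to unfold the left-hand side using the (non-reduced) set-indexed expansion $p_\bx(\by)=\sum_{S\in\binom{T}{n}}C_{p_\bx}(S)\by^{\be_S}$ recorded above, and then discard exactly the terms that violate the per-coordinate degree bounds carried by $q$. The whole argument is bookkeeping; there is no real obstacle, so the write-up should be short.

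First I would record that $C_{p_\bx}(\bkappa)$ is obtained by merging the non-reduced monomials, i.e. $C_{p_\bx}(\bkappa)=\sum_{S\in\binom{T}{n}:\,\be_S=\bkappa}C_{p_\bx}(S)$. Substituting this identity into $\sum_{\bkappa}\bkappa^\bkappa C_{p_\bx}(\bkappa)C_q(\bkappa)$ and exchanging the order of summation converts the left-hand side into a single sum over $S\in\binom{T}{n}$:
$$\sum_{\bkappa}\bkappa^\bkappa C_{p_\bx}(\bkappa)C_q(\bkappa)=\sum_{\bkappa}\bkappa^\bkappa C_q(\bkappa)\!\!\sum_{S\in\binom{T}{n}:\,\be_S=\bkappa}\!\!C_{p_\bx}(S)=\sum_{S\in\binom{T}{n}}\be_S^{\be_S}\,C_q(\be_S)\,C_{p_\bx}(S),$$
where for the relevant terms $\bkappa=\be_S$, so $\bkappa^\bkappa=\be_S^{\be_S}$ and $C_q(\bkappa)=C_q(\be_S)$; here $\be_S^{\be_S}$ is read with the convention $0^0=1$ whenever a coordinate of $\be_S$ vanishes.

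Next I would argue that this last sum is supported on $\cS$, hence equals the right-hand side. Every $S\in\binom{T}{n}$ has $|S|=n$, so $\sum_i(\be_S)_i=n$ automatically; since $q$ is $n$-homogeneous, this total-degree condition is costless. Consequently $C_q(\be_S)\neq 0$ forces $(\be_S)_i\le k_i$ for every $i\in[m]$, because the degree of $y_i$ in $q$ is at most $k_i$ by construction — and $(\be_S)_i\le k_i$ for all $i$ together with $|S|=n$ is precisely the defining condition for $S\in\cS$. Dropping the vanishing terms from the sum over $\binom{T}{n}$ leaves $\sum_{S\in\cS}\be_S^{\be_S}C_q(\be_S)C_{p_\bx}(S)$, which is the claimed right-hand side.

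The only points requiring care — and they are minor — are keeping the distinction straight between the set-indexed expansion of $p_\bx$ (whose monomials may be merged) and its genuine monomial expansion, and using the $0^0=1$ convention consistently in $\bkappa^\bkappa$ on both sides. I do not anticipate any substantive difficulty beyond this.
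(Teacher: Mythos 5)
Your argument is correct and takes essentially the same route as the paper, which states this lemma as ``immediate from the above discussion'': expanding $C_{p_\bx}(\bkappa)$ into the set-indexed coefficients $\sum_{S:\,\be_S=\bkappa}C_{p_\bx}(S)$, exchanging sums, and noting that $C_q(\be_S)$ vanishes unless $(\be_S)_i\le k_i$ for all $i$ (the total-degree condition being automatic from $|S|=n$ and the $n$-homogeneity of $q$) is precisely that discussion made explicit. No gaps.
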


Next, we will use \autoref{thm:pqstability} on polynomials $p$ and $q$ to design our relaxation and approximation algorithms.
First, we show the following lemma.
We will then use it to write a convex relaxation of the optimum solution.
\begin{lemma} For any integral allocation vector $\bx$, 
\begin{equation}\label{eq:pqrelaxation}
 \adjustlimits\sup_{\balpha\geq 0} \inf_{\by,\bz>0} \frac{p_\bx(\by)q(\balpha\bz)}{(\by\bz)^\balpha} \geq \prod_a u_a(\bx).
 \end{equation}
 \end{lemma}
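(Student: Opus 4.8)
The plan is to combine the lower bound \eqref{eq:pqupperbound} of \autoref{thm:pqstability} with \autoref{lem:pdotqcharacterization}, and then to collapse everything onto the evaluation $p_\bx(1,\dots,1)=\prod_a u_a(\bx)$ via one elementary binomial estimate. First I would apply \eqref{eq:pqupperbound} to $p:=p_\bx$ and to $q$ as defined above: both are real stable with nonnegative coefficients, so $\adjustlimits\sup_{\balpha\geq 0}\inf_{\by,\bz>0}\frac{p_\bx(\by)\,q(\balpha\bz)}{(\by\bz)^\balpha}\ \geq\ \sum_{\bkappa\in\Z_+^n}\bkappa^\bkappa C_{p_\bx}(\bkappa)C_q(\bkappa)$, and \autoref{lem:pdotqcharacterization} rewrites the right-hand side as $\sum_{S\in\cS}{\be_S}^{\be_S}C_{p_\bx}(S)C_q(\be_S)$. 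Every term here is nonnegative (since $u_{aij},x_{aij}\geq 0$ and $q$ has nonnegative coefficients), so it suffices to show that this last sum is at least $\prod_a u_a(\bx)$.

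The key point is that for an \emph{integral} $\bx$ nothing is lost by restricting the sum to the ``good'' family $\cS$. Indeed, if $\bx\in\{0,1\}^{T}$ then any $S\in\binom{T}{n}$ with $C_{p_\bx}(S)\neq 0$ must have $x_{aij}=1$ for every $(a,i,j)\in S$, so $S$ contains at most $\sum_{a,j}x_{aij}\leq k_i$ triplets with middle coordinate $i$; hence $\coord{e}{S}{i}\leq k_i$ for all $i$, i.e. $S\in\cS$. Therefore $\sum_{S\in\cS}C_{p_\bx}(S)=\sum_{S\in\binom{T}{n}}C_{p_\bx}(S)=p_\bx(1,\dots,1)=\prod_a u_a(\bx)$, where the first polynomial identity is exactly the remark recorded just after the definition of $p_\bx$. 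Consequently it is enough to prove that ${\be_S}^{\be_S}C_q(\be_S)\geq 1$ for every $S\in\cS$, after which termwise comparison (using $C_{p_\bx}(S)\geq 0$) finishes the chain.

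To finish I would compute $C_q$ explicitly. Expanding $\prod_{i=1}^m(t+y_i/k_i)^{k_i}$ and extracting the coefficient of $t^{K-n}\prod_i y_i^{\kappa_i}$ gives $C_q(\bkappa)=\prod_{i=1}^m\binom{k_i}{\kappa_i}k_i^{-\kappa_i}$ whenever $\sum_i\kappa_i=n$ and $\kappa_i\leq k_i$ for all $i$ (and $C_q(\bkappa)=0$ otherwise). Writing $\kappa_i:=\coord{e}{S}{i}$ for $S\in\cS$, this yields ${\be_S}^{\be_S}C_q(\be_S)=\prod_{i=1}^m\big(\tfrac{\kappa_i}{k_i}\big)^{\kappa_i}\binom{k_i}{\kappa_i}$ with the usual convention $0^0=1$; each factor is $\geq 1$ because $\binom{k}{\kappa}=\prod_{j=0}^{\kappa-1}\tfrac{k-j}{\kappa-j}\geq\prod_{j=0}^{\kappa-1}\tfrac{k}{\kappa}=(k/\kappa)^{\kappa}$ for $0\leq\kappa\leq k$ (the inequality $\tfrac{k-j}{\kappa-j}\geq\tfrac{k}{\kappa}$ holding since $k\geq\kappa$ and $j\geq 0$). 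Chaining the displays above gives \eqref{eq:pqrelaxation}.

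I do not expect a genuine obstacle: the whole argument rests on (i) the observation that integrality of $\bx$ forces every monomial of $p_\bx$ into $\cS$, so the passage to $\cS$ is lossless, and (ii) the one-line inequality $\binom{k}{\kappa}\geq(k/\kappa)^{\kappa}$. The only mild care needed is to keep straight the three descriptions of the relevant monomials — exponent vectors $\bkappa$, subsets $S\in\cS$, and the evaluation at $\bone$ — and to handle the degenerate coordinates where $\coord{e}{S}{i}=0$ via the convention $0^0=1$.
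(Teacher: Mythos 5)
Your proposal is correct and follows essentially the same route as the paper: apply \eqref{eq:pqupperbound} together with \autoref{lem:pdotqcharacterization}, note that integrality of $\bx$ makes the restriction to $\cS$ lossless, compute $C_q(\be_S)=\prod_i k_i^{-\coord{e}{S}{i}}\binom{k_i}{\coord{e}{S}{i}}$, and conclude termwise via $\binom{k}{\kappa}\geq (k/\kappa)^{\kappa}$ — which is exactly the paper's product inequality $\prod_{j=0}^{\kappa-1}\frac{(k-j)\kappa}{k(\kappa-j)}\geq 1$ in a slightly repackaged form. No gap.
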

\begin{proof}
Since $p_\bx,q$ are real stable polynomials with nonnegative coefficients by \autoref{thm:pqstability}, \eqref{eq:pqupperbound}, we have
 $$\adjustlimits\sup_{\balpha\geq 0} \inf_{\by,\bz>0} \frac{p_\bx(\by)q(\bz)}{(\by\bz/\balpha)^\balpha} \geq \sum_{\bkappa\in \Z_+^n}\bkappa^\bkappa C_{p_\bx}(\bkappa)C_q(\bkappa)= \sum_{S\in\cS}  C_{p_{\bx}}(S)C_q(\be_S)  \prod_{i=1}^m {\coord{e}{S}{i}}^{\coord{e}{S}{i}}$$
 where we used \autoref{lem:pdotqcharacterization}.
% where we used that all monomials of $p_\bx$ are of the form $\by^{\be_S}$ for some $S\in \cS$.
 
 Now, let us calculate $C_q(\be_S)$. %Recall that $S$ has $|S_i|$ triplets of the form $(.,i,.)$. 
Observe that the coefficient of $y_i^{\coord{e}{S}{i}}t^{k_i-\coord{e}{S}{i}}$ in
$$ (t+\sum_{a,j} y_i/k_i)^{k_i}$$
is exactly equal to $k_i^{-\coord{e}{S}{i}}{k_i\choose \coord{e}{S}{i}}$.
Therefore, 
\begin{equation}\label{eq:Cq} 
C_q(\be_{S})=\prod_{i=1}^m k_i^{-\coord{e}{S}{i}}{k_i\choose \coord{e}{S}{i}}.
\end{equation}
%= \prod_{i=1}^m \frac{K!}{(K-|S_i|)!}.$$
Therefore
\begin{eqnarray*}
\sum_{S\in \cS} C_{p_\bx}(S) C_q(\be_S)\prod_{i=1}^m {\coord{e}{S}{i}}^{\coord{e}{S}{i}}&=& \sum_{S\in\cS} C_{p_\bx}(S) \prod_{i=1}^m k_i^{-\coord{e}{S}{i}} {\coord{e}{S}{i}}^{\coord{e}{S}{i}} {k_i\choose {\coord{e}{S}{i}}}\\
&=&\sum_{S\in \cS} C_{p_\bx}(S) \prod_{i=1}^m \prod_{j=0}^{{\coord{e}{S}{i}}-1} \frac{(k_i-j){\coord{e}{S}{i}}}{k_i (\coord{e}{S}{i}-j)}\\
&\geq & \sum_{S\in \cS} C_{p_\bx}(S) = p_\bx(1,1,\dots,1),
\end{eqnarray*}
where in the inequality we used that $\coord{e}{S}{i} \leq k_i$. 
 %= \sum_{\bkappa\in\cS}  C_{p_\bx}(\bkappa) \prod_{i=1}^m \frac{K!}{(K-\kappa_i)!}
Finally, to conclude the lemma, note that $p_\bx(\bone)=\prod_a u_a(\bx)$. 
\end{proof}
%In other words, for every monomial $\by^\bkappa$ of $p_\bx(\by)$, we have $\bkappa\in \cS$.
%This proves \eqref{eq:pqrelaxation}.

Next, we use \eqref{eq:pqrelaxation} to write a convex relaxation for the optimum solution.
	\begin{equation}
		\label{eq:convprog}
		\begin{aligned}
			\adjustlimits\sup_{\bx,\balpha\geq 0} \inf_{\by,\bz>0} ~~~ & \frac{p_\bx(\by)q(\balpha\bz)}{(\by\bz)^\balpha}, &\\
			\st ~~~ & \sum_{a=1}^{n}\sum_{j=1}^{k_i} x_{aij}\leq k_i & \forall i\\
			& x_{aij}\leq 1 & \forall a, i, j\\
		\end{aligned}
	\end{equation}
	It follows by \eqref{eq:pqrelaxation} that the above mathematical program is a relaxation of the optimum.
Furthermore, observe that we can turn the above program into an equivalent convex program by a change variable $\by\leftrightarrow \exp(\by)$ and $\bz\leftrightarrow \exp(\bz)$. This proves the first part of \autoref{thm:mainstable}.
%	And here is the definition of the set $C$:
%	\[ C := \{\alpha=(\alpha_1,\dots,\alpha_m)\in \MZ^m\mid \forall i, 0\leq \alpha_i\leq k_i~\wedge~ \sum_{i=1}^{m}\alpha_i=n \}. \]
%	This definition of $C$ makes sure that the convex program \eqref{eq:convprog} remains a relaxation. In other words if $x_{aij}$'s are integral (i.e. $0/1$-valued), then the definition of $C$ makes sure that the objective in \ref{eq:convprog} is at least as large as that of \ref{eq:defconvprog}.

Next, we describe our rounding algorithm. Let $\bx$ be an optimal solution of the convex program.	
W.l.o.g. we can assume that for all $i$, $\sum_{a=1}^{n}\sum_{j=1}^{k_i}x_{aij}=k_i$. 
For each $1\leq i\leq m$,  choose $k_i$ samples independently from all triplets of the form $(.,i,.)$, where $(a,i,j)$ is chosen with probability $x_{aij}/k_i$; if $(a,i,j)$ is sampled, assign one of the copies of item $i$ to agent $a$.
For each $(a,i,j)$, let  $X_{aij}$ be the random variable indicating that $(a,i,j)$ is sampled (at least once). 
%The details are described in \autoref{alg:rounding}.
\begin{algbox}
\begin{algorithmic}
%		\State Let $X_{aij}:=0$ for all $a,i,j$.
		\For{each item type $i$}
		\For{$t=1\to k_i$}
			\State Sample $a,j$ with probability $x_{aij}/k_i$. Assign one of the copies of item $i$ to agent $a$. %If $(a,i,j)$ is already sampled do nothing.
			%\If{$X_{a i j}=0$}
			%	\State $X_{a i j}\leftarrow 1$.
			%	\State Give a copy of item $i$ to agent $a$.
			%\EndIf
		\EndFor
		\EndFor
	\end{algorithmic}
	\label{alg:rounding}
	\end{algbox}
	
Observe that the utility of agent $a$ at the end of the rounding procedure is at least 
$$u_a(\bx)\geq \sum_{i=1}^{m}\sum_{j=1}^{k_i}X_{aij}u_{aij}.$$
Note that we have an inequality as opposed to equality because $a$ may only receive two copies of item $i$ because $(a,i,1)$ and $(a,i,3)$ are sampled; in this case we write $u_{ai1}+u_{ai3}$ in the above sum to denote the utility of $a$ from item $i$, whereas the true utility of $a$ from item $i$ is $u_{ai1}+u_{ai2} \geq u_{ai1}+u_{ai3}$.

Therefore the expected  Nash welfare of the rounding algorithm is at least
	\begin{equation}
		\label{eq:expectation}
		\Ex[\text{ALG}] \geq \Ex\left[\prod_{a=1}^{n}\left(\sum_{i=1}^{m}\sum_{j=1}^{k_i}X_{aij}u_{uij}\right)\right] = \sum_{S\in\cS} \Ex\Big[\prod_{(a,i,j)\in S} X_{aij}\Big] C(S)
	\end{equation}
%	where $\mF$ is a family of sets of triplets $(a, i, j)$. The family $\mF$ consists of subsets $S$ for which the following hold: $|S|=n$ and for each $i$, $S$ contains no more than $k_i$ triplets of the form $(\cdot, i, \cdot)$. The coefficient $c_S$ is defined as follows:
%	\[ c_S=\begin{cases}
%		0 & |\{a\mid \exists i, j: (a, i, j)\in S\}|<n,\\
%		\prod_{(a, i, j)\in S}u_{aij} & |\{a\mid \exists i, j: (a, i, j)\in S\}|=n.
%	\end{cases} \]
%	Note that the family $\mF$ is very closely related to the set of constraints put on $y_i$'s in the convex program \ref{eq:convprog}, and in particular the set $C$. One equivalent way of defining $\mF$ would be to say that it is the family of all sets $S$ of triplets such that $\sum_{(a,i,j)\in S} e_i\in C$ (where $e_i$ is the $i$-th element of the standard basis in $\MZ^m$). Let us name this vector $t_S$:
%	\[ t_S := \sum_{(a, i, j)\in S} e_i.\]
%	When $S$ is clear from the context we will drop the subscript $S$ and simply write $t$.	
The following key lemma lower bounds $\Ex\big[\prod_{(a,i,j)\in S} X_{aij}\big]$ for a given $S\in \cS$.
\begin{lemma}
		\label{lem:lb}
		For any set $S\in \cS$, 
		\[ \Ex\left[\prod_{(a, i, j)\in S} X_{aij}\right]\geq \prod_{(a,i,j)\in S} \frac{x_{aij}}{k_i} \left(\prod_{i=1}^m e^{-\coord{e}{S}{i}}\frac{k_i!}{(k_i-\coord{e}{S}{i})!}\right).\]
	\end{lemma}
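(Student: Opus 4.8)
The plan is to estimate the probability that each triplet in $S$ is sampled at least once, exploiting independence across item types. Fix $S \in \cS$ and fix an item type $i$; let $S_i = \{(a,i,j) \in S\}$, so that $|S_i| = \coord{e}{S}{i} =: s_i \leq k_i$. The $k_i$ samples for type $i$ are drawn i.i.d., each equal to a particular triplet $(a,i,j)$ with probability $x_{aij}/k_i$. The event $\prod_{(a,i,j)\in S_i} X_{aij} = 1$ is exactly the event that all $s_i$ distinguished triplets appear among the $k_i$ draws. First I would lower bound this by the probability of the more restrictive event that the $k_i$ draws, restricted to the first $s_i$ of them (in some fixed order), hit the $s_i$ distinguished triplets in a one-to-one fashion — i.e., draw $1$ is the first triplet, draw $2$ is the second, and so on. Since the draws are independent, that probability is $\prod_{(a,i,j)\in S_i} (x_{aij}/k_i)$ multiplied by the number of orderings, $s_i!$, giving
\[
\Ex\Big[\prod_{(a,i,j)\in S_i} X_{aij}\Big] \;\geq\; s_i! \prod_{(a,i,j)\in S_i} \frac{x_{aij}}{k_i}.
\]
Multiplying over all $i$ (using independence across types) yields $\Ex\big[\prod_{(a,i,j)\in S} X_{aij}\big] \geq \prod_{(a,i,j)\in S} \frac{x_{aij}}{k_i} \cdot \prod_{i=1}^m s_i!$.

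It then remains only to compare $\prod_i s_i!$ with the claimed quantity $\prod_i e^{-s_i} \frac{k_i!}{(k_i-s_i)!}$. Equivalently, I must check that for each $i$,
\[
s_i! \;\geq\; e^{-s_i}\,\frac{k_i!}{(k_i-s_i)!} \;=\; e^{-s_i} \prod_{\ell=k_i-s_i+1}^{k_i} \ell.
\]
Since each of the $s_i$ factors $\ell$ in the product on the right is at most $k_i$, we have $\frac{k_i!}{(k_i-s_i)!} \leq k_i^{s_i}$, so it suffices to show $s_i! \geq e^{-s_i} k_i^{s_i}$; but this is false in general (take $k_i$ huge, $s_i=1$). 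So the crude bound above is too lossy, and I would instead keep the sampling estimate sharper.

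Here is the fix, which I expect to be the real technical heart. Rather than demanding the first $s_i$ draws hit the distinguished triplets in order, observe that the probability that a specified injective assignment of $s_i$ of the $k_i$ draw-slots to the $s_i$ distinguished triplets occurs is $\prod_{(a,i,j)\in S_i}\frac{x_{aij}}{k_i}$, and there are $k_i(k_i-1)\cdots(k_i-s_i+1) = \frac{k_i!}{(k_i-s_i)!}$ such injective slot-assignments, all of which are disjoint events whose union is contained in $\{\prod X_{aij}=1\}$ — wait, they are not disjoint, but inclusion–exclusion or a direct union bound from below is delicate. The cleaner route: let $N_{aij}$ be the number of times $(a,i,j)$ is drawn; then $\prod_{(a,i,j)\in S_i} X_{aij} = 1$ iff $N_{aij}\geq 1$ for all $(a,i,j)\in S_i$, and by inclusion–exclusion $\Pr[\text{all }N_{aij}\geq 1] = \sum_{R\subseteq S_i}(-1)^{|R|}(1-\sum_{(a,i,j)\in R}\tfrac{x_{aij}}{k_i})^{k_i}$. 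I would then lower-bound this sum; alternatively — and this is the approach I would actually pursue — use the FKG/negative-association structure of multinomial occupancy or a direct generating-function identity: $\Ex\big[\prod_{(a,i,j)\in S_i} X_{aij}\big]$ equals the probability that a multinomial with $k_i$ trials covers a given set of $s_i$ cells, which is known to be at least $\frac{k_i!}{(k_i-s_i)!\,k_i^{s_i}}\cdot\big(\prod_{(a,i,j)\in S_i} x_{aij}\big) \cdot s_i!\,/\,s_i!$ — concretely, condition on the $s_i$ distinguished triplets being precisely the first $s_i$ distinct values seen and use that $\Pr[\text{first }s_i\text{ draws are all distinct and equal to }S_i\text{ in some order}]$ already gives the factor $\frac{k_i!}{(k_i-s_i)!}\cdot k_i^{-s_i}$ is wrong sign again.

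Given the subtlety, the honest plan is: prove the per-type bound $\Ex[\prod_{(a,i,j)\in S_i}X_{aij}] \geq \big(\prod_{(a,i,j)\in S_i}\frac{x_{aij}}{k_i}\big)\cdot e^{-s_i}\frac{k_i!}{(k_i-s_i)!}$ directly via the coupon-collector-type estimate $\Pr[\text{all of }s_i\text{ cells hit in }k_i\text{ multinomial trials}] \geq \prod_{\ell=0}^{s_i-1}\frac{k_i-\ell}{k_i}\cdot (\text{correction})$ combined with $\prod_{\ell=0}^{s_i-1}(1-\ell/k_i)\geq e^{-\sum_{\ell<s_i}\ell/(k_i-\ell)}\geq e^{-s_i}$ — i.e., the exponential factor $e^{-s_i}$ is exactly what absorbs the loss $\prod_{\ell=0}^{s_i-1}(1-\ell/k_i)$ from the distinctness constraint. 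So the key steps, in order, are: (1) reduce to a single item type by independence; (2) write the target probability as the probability that $s_i$ designated cells are all occupied among $k_i$ multinomial draws with cell-weights $x_{aij}/k_i$; (3) lower-bound this by $\big(\prod \frac{x_{aij}}{k_i}\big)$ times the number of ordered ways $\frac{k_i!}{(k_i-s_i)!}$ to place the $s_i$ designated cells into distinct draw-slots, times the probability the remaining $k_i-s_i$ slots don't "collide" in a way already counted — and show this residual factor is at least $e^{-s_i}$; (4) recombine over $i$. The main obstacle is step (3): making the counting argument rigorous so that the union over orderings and slot-choices is genuinely a lower bound (it involves showing that the "off-diagonal" overcounting is dominated, for which I would use that the events {slot-set $A$ carries exactly the triplets of $S_i$ injectively} for distinct slot-sets $A$ are disjoint, so their total probability $\frac{k_i!}{(k_i-s_i)!}\prod\frac{x_{aij}}{k_i}$ is a clean lower bound on $\Pr[\text{all designated cells hit}]$ once one notes the event "all designated cells appear" contains the disjoint union of these, giving directly the bound without any $e^{-s_i}$ at all — meaning the $e^{-s_i}\,k_i!/(k_i-s_i)!\leq k_i!/(k_i-s_i)!$ claimed bound is then immediate and even a bit wasteful). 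Thus the proof should close cleanly once the disjointness observation in step (3) is stated correctly.
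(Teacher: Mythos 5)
Your proposal circles the right construction but ends on a false claim, so as written it does not close. The events you finally settle on --- ``slot-set $A$ carries exactly the triplets of $S_i$ injectively,'' over all choices of $A$ and all bijections --- are \emph{not} disjoint: with $S_i=\{u,v\}$ and draws $(u,v,u)$ in slots $1,2,3$, both slot-sets $\{1,2\}$ and $\{2,3\}$ carry $S_i$ bijectively. Consequently the sum of their probabilities, $\frac{k_i!}{(k_i-s_i)!}\prod_{(a,i,j)\in S_i}\frac{x_{aij}}{k_i}$ with $s_i=\coord{e}{S}{i}$, is an overcount and is \emph{not} a lower bound on $\Pr[\text{all of }S_i\text{ hit}]$. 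A concrete counterexample to your ``without any $e^{-s_i}$ at all'' conclusion: take $k_i=2$, $s_i=1$, and one designated triplet with $x_{aij}/k_i=1/2$; the hit probability is $1-(1/2)^2=3/4$, but your bound would give $2\cdot\tfrac12=1$.

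The missing idea --- and the actual source of the $e^{-\coord{e}{S}{i}}$ in the statement --- is to force disjointness by strengthening each event: for an injective $t:S_i\to[k_i]$, require not only that slot $t(a,i,j)$ samples $(a,i,j)$ for each $(a,i,j)\in S_i$, but also that every slot outside $t(S_i)$ samples a triplet \emph{not} in $S_i$. These strengthened events are pairwise disjoint (the event determines which slots sample into $S_i$ and what they sample), and each has probability $(1-z)^{k_i-s_i}\prod_{(a,i,j)\in S_i}\frac{x_{aij}}{k_i}$ where $z=\sum_{(a,i,j)\in S_i}x_{aij}/k_i\le s_i/k_i$. The avoidance factor satisfies $(1-z)^{k_i-s_i}\ge(1-s_i/k_i)^{k_i-s_i}\ge e^{-s_i}$, and summing over the $k_i!/(k_i-s_i)!$ injections gives exactly the per-type bound; multiplying over $i$ by independence finishes the proof. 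So the $e^{-s_i}$ is not slack to be discarded --- it is the price of the avoidance condition that makes the counting a genuine lower bound. Your step (1) (reduction to a single item type) and your count of injections are correct; only step (3), the disjointness, needs this repair.
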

	\begin{proof}
		Note that the rounding procedure is independent  for different item types. So it is enough to separate this inequality, and prove it for each item type. So let us fix an item type $i\in [m]$ and let $S_i=S\cap \{(., i, .)\}$. We will show that
		\begin{equation}\label{eq:algicaseprob} \Ex\left[\prod_{(a,i,j)\in S_i} X_{aij}\right]=\mathbb{P}[X_{aij}=1, \forall (a,i,j)\in S_i] \geq  e^{-\coord{e}{S}{i}}\frac{k_i!}{(k_i-\coord{e}{S}{i})!} \prod_{(a,i,j)\in S_i} \frac{x_{aij}}{k_i}.
		\end{equation}
If $X_{aij}=1$ for all $(a,i,j)\in S_i$, then we can define the function $t:S_i\to \{1,\dots, k_i\}$, 
where $t(a,i, j)$ represents the time that $(a, i, j)$ was first sampled. Note that $t$ is necessarily injective.

Now, for any injective function $t:S_i\to[k_i]$, consider the event $\cE_t$ defined in the following way: $(a,i,j)$ was sampled at time $t(a,i,j)$ for every $(a,i,j)\in S_i$ and at every other time $t'\notin t(S_i)$, the sampled triplet $(a',i,j')$ was not in $S$.

By definition, the events $\cE_t$ are disjoint for different functions $t$. Therefore
\[ \Ex\left[\prod_{(a,i,j)\in S_i} X_{aij}\right]\geq \sum_{t:S_i\to[k_i]\text{ injective}} \mathbb{P}[\cE_t]. \]

Let
$$ z=\sum_{(a,i,j)\in S_i} x_{aij}/k_i$$
be the probability that at any given time, a triplet $(a,i,j)\in S$ is sampled. Then, $\cE_t$ occurs with probability
$$ (1-z)^{k_i-\coord{e}{S}{i}} \prod_{(a,i,j)\in S_i} \frac{x_{aij}}{k_i} \geq (1-\coord{e}{S}{i}/k_i)^{k_i-\coord{e}{S}{i}} \prod_{(a,i,j)\in S} \frac{x_{aij}}{k_i}  \geq  e^{-\coord{e}{S}{i}}\prod_{(a,i,j)\in S_i} \frac{x_{aij}}{k_i}.$$
The first inequality uses that $z\leq \coord{e}{S}{i}/k_i$ and the last inequality uses that $(1-z/k)^{k-z} \geq e^{-z}$ for all $0\leq z\leq k$.
Equation \eqref{eq:algicaseprob} follows from the above and the fact that there are $k_i!/(k_i-\coord{e}{S}{i})!$ choices for $t:S_i\to [k_i]$. This completes the proof of the lemma.
	\end{proof}
Now, we are ready to finish the proof of \autoref{thm:mainstable}. We show that the expected Nash welfare of the rounded solution is at least $e^{-2n} \sup_{\balpha\geq 0} \inf_{\by,\bz>0} \frac{p_\bx(\by)q(\balpha\bz)}{(\by\bz)^\balpha}$.

It follows by the above lemma that the expected Nash welfare of the allocation of the rounded solution is at least
\begin{eqnarray} \Ex[\text{ALG}]&\geq& \sum_{S\in \cS} C(S)\prod_{(a,i,j)\in S} \frac{x_{aij}}{k_i}  \prod_{i=1}^m e^{-\coord{e}{S}{i}} \frac{k_i!}{ (k_i-\coord{e}{S}{i})!}\nonumber \\
&=& e^{-n}\sum_{S\in \cS} C_{p_\bx}(S)\prod_{i=1}^m \frac{k_i!}{k_i^{\coord{e}{S}{i}} (k_i-\coord{e}{S}{i})!}
\label{eq:exalglower}
\end{eqnarray}
	On the other hand, since $p_\bx,q$ are real stable with nonnegative coefficients and are $n$-homogeneous, by \autoref{thm:pqstability}, \eqref{eq:pqlowerbound}, we have
	$$ \adjustlimits\sup_{\balpha\geq 0} \inf_{\by,\bz>0} \frac{p_\bx(\by)q(\balpha\bz)}{(\by\bz)^{\balpha}}\leq e^n \sum_{S\in \cS} \be_S! \cdot C_{p_\bx}(S) C_q(\be_S).$$
	
	Therefore, by \eqref{eq:Cq} we can write,
	\begin{eqnarray*}
	\adjustlimits\sup_{\balpha\geq 0} \inf_{\by,\bz>0} \frac{p_\bx(\by)q(\balpha\bz)}{(\by\bz)^{\balpha}} &\leq& e^n \sum_{S\in \cS} \be_S! C_{p_\bx}(\be_S) C_q(\be_S)\\
	&= &
	e^n \sum_{S\in \cS} C_{p_\bx}(S) \prod_{i=1}^m \frac{\coord{e}{S}{i}!{k_i\choose \coord{e}{S}{i}}}{k_i^{\coord{e}{S}{i}}}\\
	&=& e^n\sum_{S\in \cS} C_{p_\bx}(S) \prod_{i=1}^m \frac{k_i!}{k_i^{\coord{e}{S}{i}} (k_i-\coord{e}{S}{i})!} \leq e^{2n} \Ex[\text{ALG}].
	\end{eqnarray*}
	The last inequality follows from \eqref{eq:exalglower}.

\bibliography{kelly} 
\bibliographystyle{alpha}

\end{document}